\newcommand{\SNR}{\mathsf{SNR}}
\newcommand{\INR}{\mathsf{INR}}
\newcommand{\GDoF}{\mathsf{GDoF}}
\newcommand{\DoF}{\mathsf{DoF}}
\newcommand{\rank }{\mathrm{rank}}
\newcommand{\FB }{\mathrm{FB}}
\newcommand{\bS}{\mathbf{S}}
\newcommand{\cC}{\mathcal{C}}
\newcommand{\E}{\mathbb{E}}
\newcommand{\F}{\mathbb{F}}
\newcommand{\sR}{\mathscr{R}}
\newcommand{\Rs}{R_{\mathrm{sym}}}
\newcommand{\Cs}{\tilde{C}_{\mathrm{sym}}}
\newcommand{\cV}{\mathcal{V}}
\newcommand{\bc}{\mathbf{c}}
\newcommand{\bd}{\mathbf{d}}
\newcommand{\bx}{\mathbf{x}}
\newcommand{\by}{\mathbf{y}}
\newcommand{\bz}{\mathbf{z}}
\newcommand{\bs}{\mathbf{s}}
\newcommand{\tz}{\tilde{z}}
\newcommand{\g}{\tilde{g}}
\renewcommand{\t}[1]{\mathbf{Tx}_{#1}}
\renewcommand{\r}[1]{\mathbf{Rx}_{#1}}
\newcommand{\cgc}[3]{\frac{#1}{#2}\log\left(1+ #3 \right)}
\newcommand{\cg}[1]{\frac{1}{2}\log\left(1+ #1 \right)}
\newtheorem{lm}{Lemma}
\newtheorem{remark}{Remark}
\newtheorem{thm}{Theorem}
\newtheorem{example}{Example}
\begin{document}

\title{On the Feedback Capacity of the Fully Connected $K$-User Interference Channel}

\author{Soheil~Mohajer,  
Ravi Tandon, ~\IEEEmembership{Member, IEEE}, 
and 
H. Vincent Poor, \IEEEmembership{Fellow,  IEEE}
  \thanks{Manuscript received October 21 2011; revised May 24 2012; and accepted December 14 2012. }
  \thanks{{\footnotesize 
      This paper was presented in part at IEEE International Symposium on Information Theory (ISIT), Cambridge, USA, 2012. 
            
      Soheil Mohajer  was with the Department of Electrical Engineering, 
      Princeton University, Princeton, NJ, USA. He is now with  the
      Department of Electrical Engineering and Computer Sciences,
      University of California at Berkeley, Berkeley, CA, USA 
      (E-mail: mohajer@eecs.berkeley.edu).
      
      Ravi Tandon  was with the Department of Electrical Engineering, 
      Princeton University, Princeton, NJ, USA. He is now with the
     Department of Electrical and Computer Engineering, Virginia Tech, 
     Blacksburg, VA, USA. (E-mail: tandonr@vt.edu).
     
	 H. Vincent Poor is with the Department of Electrical Engineering, Princeton
     University, Princeton, NJ, USA. (E-mail: poor@princeton.edu).
     
     The work was supported in part by the Air Force Office of Scientific 
     Research under MURI Grant FA-$9550$-$09$-$1$-$0643$ and in 
     part by the DTRA under Grant HDTRA-$07$-$1$-$0037$. The work 
     of Soheil Mohajer is partially supported by The Swiss National Science 
     Foundation under Grant PBELP2-$133369$.}}  
     \thanks{Copyright (c) 2012 IEEE. Personal use of this material is permitted.  However, permission to use this material for any other purposes must be obtained from the IEEE by sending a request to pubs-permissions@ieee.org.}}

\markboth{IEEE Transactions on Information Theory,~Vol.~x, No.x~, Month~20xx}%
{Shell \MakeLowercase{\textit{et al.}}: Bare Demo of IEEEtran.cls for Journals}

\maketitle

\begin{abstract}
	The symmetric $K$ user interference channel with fully connected topology is considered, in which {\sf (a)} each receiver suffers interference from all other $(K-1)$ transmitters, and {\sf (b)} each transmitter has causal and noiseless feedback from its respective receiver.  The number of generalized degrees of freedom ($\GDoF$) is characterized in terms of $\alpha$, where the interference-to-noise ratio ($\INR$) is given by $\INR=\SNR^\alpha$. It is shown that the per-user $\GDoF$ of this network is the same as that of the $2$-user interference channel with feedback, except for $\alpha=1$, for which existence of feedback does not help in terms of $\GDoF$. The coding scheme proposed for this network, termed cooperative interference alignment, is based on two key ingredients, namely, interference alignment and interference decoding. Moreover, an approximate characterization is provided for the symmetric feedback capacity of the network, when the $\SNR$ and $\INR$ are far apart from each other. 
\end{abstract}

\section{Introduction}
Wireless networks with multiple pairs of transceivers are quite common in modern communications, notable examples being  wireless local area networks (WLANs) and cellular networks. Multiple independent flows of information share a common medium in such multiple unicast wireless networks. The broadcast and superposition nature of the wireless medium introduces complex signal interactions between multiple competing flows. In contrast to the point-to-point wireless channel, where  a noisy version of a single transmitted signal is  received at a given receiver, a combination of various wireless signals are observed at receivers in multiple unicast systems. In such scenarios, each decoder has to deal with all interfering signals in order to decode  its intended message. Managing such interfering signals in a multi-user network is a long standing and fundamental problem in wireless communication.

The simplest example in this category is the $2$-user  interference channel \cite{HK:81}, in which  two transmitters with independent messages wish to communicate with their respective receivers over the wireless transmission medium.  Even for this simple $2$-user  network, the complete information-theoretic characterization of the capacity region has been  open for several decades. To study more general networks, there is a clear need for a deep understanding  and perhaps develop novel interference management techniques. 

Although the exact characterization of the capacity region of the $2$-user  Gaussian interference channel is still unknown,  several inner and outer bounds are known. These bounds are very useful in the sense of providing  an approximate characterization when there exists a guarantee on the gap between them. This approach has resulted in an approximate characterization, within one bit, by Etkin, Tse, and Wang in  \cite{ETW:08} as well as Telatar and Tse in \cite{TT:IF:07}. This characterization includes upper bounds for the capacity of the network, as well as encoding/decoding strategies based on Han-Kobayashi scheme \cite{HK:81}, which perform close to optimal. Moreover, it has been shown that the gap between the fundamental information-theoretic bounds and what can be achieved using the proposed schemes is provably small. Therefore, the capacity  can be approximated within a narrow range, although the exact region is still unknown. 

A similar approximate characterization (with a larger gap) for this problem is developed in \cite{BT08}, in which both coding scheme 
and bounding techniques are devised by studying the problem under the \emph{deterministic} model. This framework, introduced by  Avestimehr, Diggavi, and Tse  in \cite{ADT:11:J}, focuses on complex signal interactions in a wireless network by ignoring the randomness of the noise. Recently, it has  been successfully applied to several problems, providing valuable insights for the more practically relevant Gaussian problems.

Several interference management techniques have been proposed for operating over more complex interference networks.  Completely or partially decoding and removing interference (interference suppression) when it is strong and treating it as noise when it is weak are perhaps the most widely used schemes. More sophisticated schemes such as interference  alignment \cite{MMK08, CJ08} have been proposed recently. However, it still remains to be seen whether the capacity of general interference networks can be  achieved with any combination of these techniques. 

It is well known that feedback does not increase the capacity of point-to-point discrete memoryless channels \cite{zero:56}. However, 
feedback is beneficial in improving the capacity regions of more complex networks (see \cite{EGK-book} and references therein). The effects of feedback on the capacity region of the interference channel have been studied in several  papers. Feedback coding schemes for $K$-user Gaussian interference networks have been developed by Kramer in \cite{kramer2002feedback}. Outer bounds for the $2$-user interference channel with generalized feedback have been derived in \cite{GastparKramer:06} and \cite{tandon:11}. The effect of feedback on the capacity of the $2$-user interference channel is studied in \cite{CJ-FIC:08}, where it is shown that feedback provides multiplicative gain in the capacity at high signal-to-noise ratio ($\SNR$), when the interference links are much stronger than the direct links. The entire feedback capacity region of the  $2$-user Gaussian interference channel has been characterized within a $2$ bit gap by Suh and Tse in \cite{SuhTse11}. This includes all regimes of interference, and finite and asymptotic regimes of $\SNR$. The gap between the capacity of the channel with and without feedback can be arbitrarily large for certain channel parameters. The key technique for the strong interference regime is to use the feedback links to create an artificial path  from each transmitter to its respective receiver through the other nodes in the network. For instance, the message intended for $\r{1}$, can be sent either through the direct link $\t{1} \rightarrow \r{1}$, or the cyclic path $\t{1}\rightarrow \r{2} \rightarrow \t{2} \rightarrow \r{1}$. In particular, the advantage of such artificial paths can be clearly understood when the cross links are much stronger than the direct links (e.g., the strong interference regime).  This observation becomes very natural by studying the problem under the deterministic framework. 

The first extension of \cite{SuhTse11} to a multi-user setting is the $K$-user cyclic interference channel with feedback, where each receiver's signal is interfered with only one of its neighboring transmitters,  in a cyclic fashion. The effect of feedback  on the capacity region of this network is addressed in \cite{TMP:11}. It is shown that although feedback improves the symmetric capacity  of the $K$-user interference channel,  the improvement in symmetric capacity per user vanishes as $K$ grows. The intuitive reason behind this result is that the configuration of the network allows only one cyclic path, which has to be shared between all pair of transceivers. The amount of information that can be conveyed through this path does not scale with $K$, and therefore the gain for each user scales inverse linearly with $K$. 

In another extreme, each transmit signal may be corrupted by all the other signals transmitted by the other base stations. This model is appropriate for a network with densely located nodes, where everyone hears everyone else.  This network, which we call \emph{the fully connected $K$-user interference channel} (FC-IC), is another generalization of the $2$-user interference channel. Fig.~\ref{fig:general-model} shows the fully connected IC with feedback for $K=3$ users. In this paper, we study the FC-IC network with feedback, and for simplicity, we consider a symmetric network topology,  where all the direct links (from each transmitter to its respective receiver) have the same gain, and similarly, the gain of all cross (interfering) links are identical. A similar setting without feedback has been studied by Jafar and Vishwanath in \cite{JV:K-IF:10}, where the number of symmetric degrees of freedom is characterized. An approximate sum capacity of this network is recently found by Ordentlich \emph{et al.}  in \cite{KIC-SumCap:Erez}. In this paper, the impact of feedback is studied for the $K$-user FC-IC. The main contribution of this paper is to show that feedback can arbitrarily improve the performance of the network, and in contrast to the cyclic case \cite{TMP:11}, it \emph{does scale} with the number of users in the systems. In particular, except for the intermediate interference regime where the signal-to-noise ratio is equal to the interference-to-noise ratio ($\SNR=\INR$), the effect of interference from $K-1$ users is as if there were only one interfering transmitter in the network. This is analogous to the result of \cite{CJ08}, where it is shown that the number of per-user degrees of freedom of the $K$-user fading interference channel, is the same as if there were only $2$ users in the network. 

\begin{figure}[t]
\centering
\subfigure[A cellular interference network.]{
\includegraphics[height=6cm]{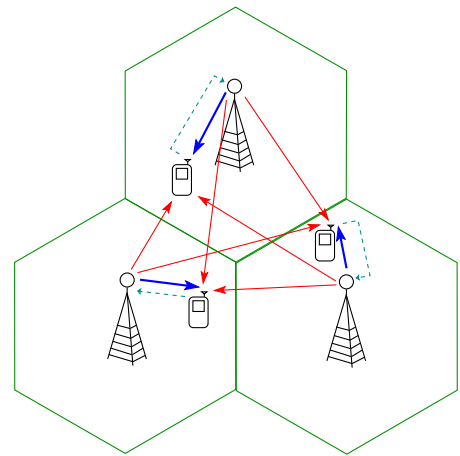}
\label{fig:subfig1}
}\vspace{1cm}
\subfigure[Interference network with feedback.]{
 	\psfrag{t1}[Bc][Bc]{$\t{1}$}
	\psfrag{t2}[Bc][Bc]{$\t{2}$}
	\psfrag{t3}[Bc][Bc]{$\t{3}$}
	\psfrag{r1}[Bc][Bc]{$\r{1}$}
	\psfrag{r2}[Bc][Bc]{$\r{2}$}
	\psfrag{r3}[Bc][Bc]{$\r{3}$}
\includegraphics[height=6cm]{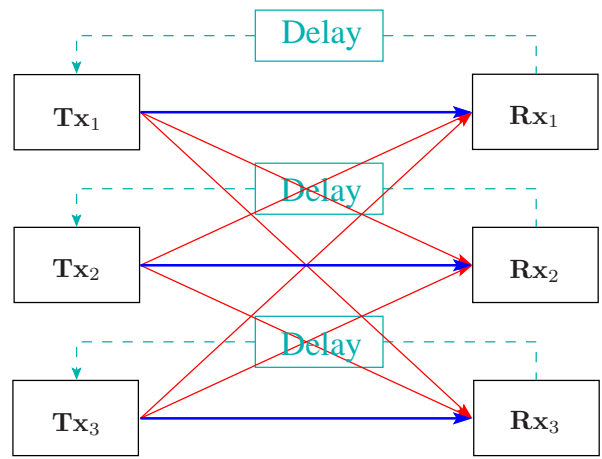}
\label{fig:subfig2}
}
\label{fig:general-model}
\caption{A cellular network with three base stations and three clients in (a),  simplified and modelled as the network in (b).}
\end{figure}

In order to get the  maximal benefit of feedback, we propose a novel encoding scheme, called cooperative interference alignment, which combines two well-known interference management techniques, namely, interference alignment and interference decoding. More precisely, the encoding  at the transmitters is such that all the interfering signals are aligned at each receiver. However, a fundamental difference between our approach and the standard interference alignment approach is that we need to decode interference to be able to remove it from the received signal, while the aligned interference is usually suppressed in standard approaches. A challenge here, which makes this problem fundamentally different from the $2$-user inference channel, is that the interference is a combination of $(K-1)$ interfering messages, and decoding all  of them induces strict bounds on the rate of the interfering messages. However, each transmitter does not need to decode all the interfering messages individually, instead, upon receiving feedback, it only decodes the combination of them that corrupts the intended signal of interest. To this end, we propose using a common structured code, which has the property that the summation of codewords of different users is still another codeword from the same codebook. Lattice codes \cite{ZSE-lattice} are a suitable choice to satisfy this desired property. This idea is similar to that used in \cite{nazer2007computation} and \cite{jafarian-lattice09}.

The rest of this paper is organized as follows. First, we formally present the model, introduce notation, and state the problem in Section~\ref{sec:problem}. The main result of the paper is presented in Section~\ref{sec:result}. Before proving the result for the Gaussian network, we study the problem under the deterministic model in Section~\ref{sec:det}, where we characterize the exact feedback capacity of the deterministic network. Based on the insight and intuition obtained by analysis of the deterministic network, we present the converse proof and the coding scheme for the Gaussian network in Sections~\ref{sec:G-Ach} and \ref{sec:G-UB}, respectively. Having the approximate feedback capacity of the network, we derive the generalized degrees of freedom with feedback in Section~\ref{sec:dof}. We further extend the result of the paper, and study the case of global feedback, where each transmitter receives feedback from all receivers in Section~\ref{sec:GF}, and finally, conclude the paper in Section~\ref{sec:con}. In order to make the paper easily readable,  some of the technical proofs are postponed to the appendices.  Parts of this work
have been presented in \cite{MTP:ISIT12}.

\section{Problem Statement}
\label{sec:problem}
In this work we consider a network with $K$ pairs of transmitter/receivers. Each transmitter $\t{k}$ has a message $W_k$ that it wishes to send  to its respective receiver $\r{k}$. The signal transmitted by each transmitter is corrupted by  the interfering signals sent by other transmitters, and received at the receiver. This can be mathematically modelled as
\begin{align}
y_k(t) = \sqrt{\SNR} x_k(t) + \sum_{\begin{subarray}{c}   i=1 \\ i\neq k \end{subarray}}^K \sqrt{\INR} x_i(t) + z_k(t),
\label{eq:channel}
\end{align}
where $x_k$ and $y_k$ are  the signals transmitted and received by $\t{k}$ and $\r{k}$, respectively, and $z_k\sim\mathcal{N}(0,1)$ is an additive white Gaussian noise. All transmitting powers are constrained to $1$, i.e., $\E[x_k^2]\leq 1$, for $k=1,\dots,K$. We assume a symmetric network, where all the cross links have the same gain ($\INR$), and the gains of the all the direct link ($\SNR$) are identical.

There is a perfect feedback link from each receiver to its respective transmitter. Hence, at each time instance, each transmitter generates each transmitting signal based on its own message as well as the output sequence observed at its receiver over the past time instances, i.e., 
\begin{align}
x_{kt}=g_{kt} (W_k, y_{k1}, y_{k2},\dots, y_{k(t-1)})= g_{kt}(W_k, y_k^{t-1}),
\label{eq:encoding}
\end{align}
where we use shorthand notation $y_k^{t-1}=(y_{k1}, y_{k2},\dots, y_{k(t-1)})$ to indicate the output sequence observed at $\r{k}$ up to time $t-1$.

A rate tuple $(R_1,R_2,\dots,R_K)$ is called achievable if there exists a family of codebooks with block length $T$ with proper power and corresponding encoding/decoding functions such that the average decoding error probability  tends to zero for all users as $T$ increases. We denote the set of all achievable rate tuples by $\sR$.  In the high signal to noise ratio regime, the performance of wireless networks is measured in terms of the number of degrees of freedom, that is the pre-log factor in the expression of the capacity in terms of $\SNR$. We consider the generalized degrees of freedom ($\GDoF$) for this network in the presence of feedback. Since the problem is parametrized in terms of two growing factors\footnote{The notion of degrees of freedom ($\DoF$)  captures the asymptotic behavior of the capacity, where the transmit power grows to infinity. However, this forces all channels to be equally strong, i.e., all the power of all received signals from different links grow at the same rate. Therefore, it is not very insightful towards finding optimal transmission schemes  when some signals are significantly stronger or weaker than others. The generalized degrees of freedom which allows different rate of growth for $\SNR$ and $\INR$ is more useful metric in such scenarios. We refer the reader to \cite{jafar2011interference} for a comprehensive discussion on these metrics.}, namely $\SNR$ and $\INR$,  we use the standard parameter $\alpha$ (as in \cite{ETW:08} and \cite{JV:K-IF:10}) to capture the growth rate of $\INR$ in terms of $\SNR$. More formally, we define
\begin{align}
\alpha=\frac{\log \INR}{\log \SNR},
\label{eq:def:alpha}
\end{align}
and the \emph{per-user} generalized degrees of freedom as
\begin{align}
d(\alpha)= \frac{1}{K} \limsup_{\SNR\rightarrow\infty} \frac{\max_{(R,\dots,R)\in \sR}\sum_{k=1}^K R_k(\SNR,\alpha)}{\frac{1}{2} \log \SNR}.
\end{align}
It is worth mentioning that the half factor appears in the denominator since we are dealing with real signals. Our primary goal is to characterize the generalized degrees of freedom of the $K$-user interference channel with output feedback.

As mentioned earlier, the $\GDoF$ characterizes the performance of the network in the asymptotic $\SNR$ regime. However, in order to study practical networks, capacity is a more accurate measure to capture the performance. In order to consider such a high resolution analysis, we define the symmetric capacity of the network, that is 
\[
\Rs = \max_{(R,\dots,R)\in \sR} R.
\]
In this work we are interested in characterizing $\Rs$ for the $K$-user interference channel with feedback. Although finding the exact symmetric capacity is extremely difficult, we make progress on this problem, and approximately characterize the capacity when the $\SNR$ and $\INR$ are not close to each other, that is when $\alpha$ (defined in \eqref{eq:def:alpha}) is not equal to $1$. To this end, we derive outer bounds and propose coding schemes for the network, and show that the gap between the achievable rate and the outer bound is a function \emph{only} of $K$, the number of users in the network, and is independent of $\SNR$ and $\INR$.

\section{Main Results}
\begin{figure*}[t]
\begin{center}
 	\psfrag{1}[Bc][Bc]{$2$}
 	\psfrag{2}[Bc][Bc]{$1$}
	\psfrag{3}[Bc][Bc]{$\frac{2}{3}$}
	\psfrag{4}[Bc][Bc]{$\frac{1}{2}$}
	\psfrag{d}[Bc][Bc]{$d(\alpha)$}
	\psfrag{a}[Bc][Bc]{$\alpha$}
	\psfrag{5}[Bc][Bc]{$\frac{1}{K}$}
	\psfrag{A}[Bc][Bc]{\begin{footnotesize}$K$-user/ w. FB\end{footnotesize}}
	\psfrag{B}[Bc][Bc]{\begin{footnotesize}$K$-user/ no FB\end{footnotesize}}
\includegraphics[width=0.7\textwidth]{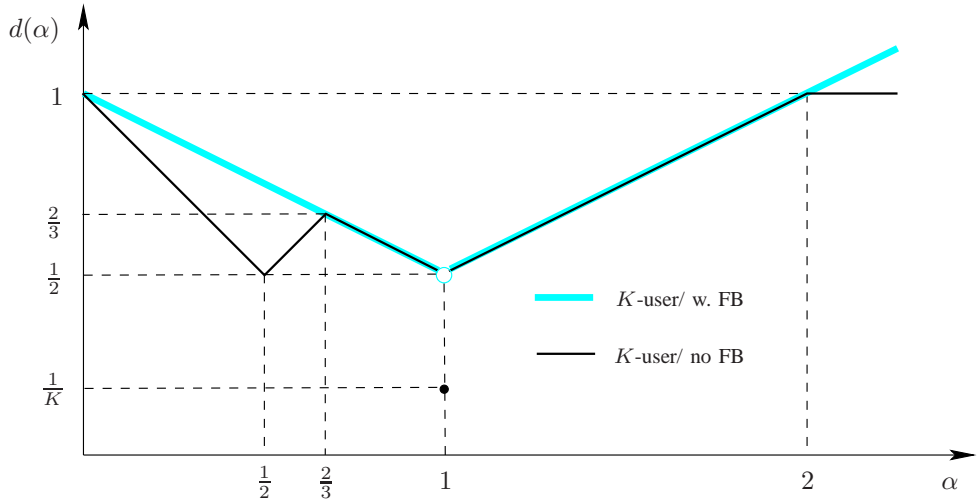}
\end{center}
\caption{The per-user generalized degrees of freedom for the $K$-user interference channel.}
\label{fig:dof}
\end{figure*}

\label{sec:result}
In this section we present the main results of this paper. The first theorem characterizes the generalized degrees of freedom of the $K$-user FC-IC with feedback. 
\begin{thm}
For the $K$-user fully connected interference channel (FC-IC) with output feedback, the per-user $\GDoF$ is given by
\begin{align*}
d_{\FB}(\alpha) = \left\{ \begin{array}{ll}
1-\frac{\alpha}{2} & \alpha<1 \ \textrm{(weak interference)}\\
\textrm{not well-defined} & \alpha=1 \\
\frac{\alpha}{2} & \alpha>1 \ \textrm{(strong interference)}
\end{array}
\right.
\end{align*}
\label{thm:DoF}
\end{thm}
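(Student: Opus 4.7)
The plan is to prove the theorem by establishing matching upper and lower bounds on the per-user symmetric rate $\Rs$, and then extracting the pre-log factor by dividing by $\tfrac{1}{2}\log\SNR$ and letting $\SNR\to\infty$. Following the organization of the paper, I would first analyze the deterministic Avestimehr--Diggavi--Tse version of the channel (Section~\ref{sec:det}) to expose the combinatorial structure behind both the scheme and the bound, and then lift the construction to the Gaussian model (Sections~\ref{sec:G-Ach} and~\ref{sec:G-UB}) using nested lattices and bounded-gap arguments, so that the only loss compared to the deterministic picture is an additive constant depending on $K$.

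For the achievability, my plan is to use cooperative interference alignment built on nested lattice codes, where all $K$ transmitters draw codewords from a common lattice so that the modulo sum of any subset of codewords is again a lattice point. In the strong regime $\alpha>1$, each receiver $\r{k}$ sees its intended codeword plus a scaled version of the sum of $K-1$ interfering codewords; by the lattice property this aggregate is itself a codeword at effective rate $\tfrac{\alpha}{2}\log\SNR$, so the receiver can first decode the aggregate, subtract it, and recover its own message, which makes $K-1$ interferers behave rate-wise like a single one. Feedback is then used in the Suh--Tse fashion: after decoding, each transmitter $\t{k}$ reconstructs the aggregate interference observed at $\r{k}$ and retransmits it in the next block, opening the artificial cyclic path $\t{k}\!\to\!\text{others}\!\to\!\r{k}$ that carries the extra bits needed to reach $d_{\FB}(\alpha)=\alpha/2$. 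In the weak regime $\alpha<1$, I would combine the same lattice alignment with a common/private split in the Han--Kobayashi spirit: the private layers of the $K-1$ interferers fall below the noise floor at each receiver, while a single aligned common layer is decoded with the help of feedback, yielding $1-\alpha/2$.

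For the converse, the most economical route is to reduce the $K$-user problem pairwise to a two-user problem and invoke the Suh--Tse outer bound. Concretely, for any pair $(i,j)$, provide the remaining messages $\{W_k:k\neq i,j\}$ and the corresponding transmit sequences $\{x_k^T:k\neq i,j\}$ as genie side information to both $\r{i}$ and $\r{j}$. These receivers can then cancel the contributions of all other transmitters, leaving a two-user Gaussian interference channel with feedback between $(\t{i},\r{i})$ and $(\t{j},\r{j})$; since providing genie information cannot reduce capacity, Suh--Tse gives $R_i+R_j\leq 2\max\!\bigl(1-\tfrac{\alpha}{2},\tfrac{\alpha}{2}\bigr)\cdot\tfrac{1}{2}\log\SNR + o(\log\SNR)$. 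Specializing to the symmetric rate, dividing by $\tfrac{1}{2}\log\SNR$, and taking $\SNR\to\infty$ yields $d_{\FB}(\alpha)\leq \max(1-\alpha/2,\alpha/2)$, which matches the achievability for $\alpha\neq 1$.

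The main obstacle, and the genuinely new content relative to the two-user case, is ensuring that the aggregate of $K-1$ independently generated interfering signals really does behave like a single interferer at the decoding stage. Without the lattice structure one would have to jointly decode $K-1$ individual messages, costing a factor of $(K-1)$ in the interfering rate constraint and destroying the bound. Coordinating the block-by-block scheme so that (i) the aligned aggregate is decodable both at each receiver and, via feedback, at each transmitter, and (ii) the shaping and quantization losses accumulate only into a $K$-dependent but $(\SNR,\INR)$-independent constant gap, is the principal technical difficulty. The exclusion of $\alpha=1$ in the statement is natural from this viewpoint, since at the crossover the intended signal and the aggregate interference occupy the same signal scale and the alignment-then-subtract argument collapses.
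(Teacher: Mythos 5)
Your achievability sketch is broadly in the spirit of the paper (lattice-based alignment of the aggregate interference plus a two-block feedback scheme; the paper actually reserves lattices for the weak regime and handles $\alpha>1$ with plain Gaussian codebooks, residual forwarding and zero-forcing, but that is a stylistic difference). The converse, however, has a genuine gap. You propose to give the receivers $\{W_k, x_k^T : k\neq i,j\}$ as a genie, cancel the other transmitters, and invoke the Suh--Tse two-user feedback outer bound. The resulting channel is \emph{not} a two-user interference channel with output feedback in the sense required by that bound: the encoders $\t{i},\t{j}$ form $x_{it}=g_{it}(W_i,y_i^{t-1})$ where $y_i^{t-1}$ is the \emph{full} $K$-user output, which contains the signals of the other $K-2$ transmitters; those signals are themselves functions, through their own feedback loops, of the past inputs of users $i$ and $j$ and of the other noises. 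So after your genie-aided cancellation at the receivers, the encoders still enjoy feedback carrying information about the partner's input (and fresh noise terms) routed through the other pairs --- precisely the ``artificial path'' mechanism that the paper identifies as capacity-increasing --- and this lies outside the encoder model for which Suh--Tse is proved. Removing the other users is exactly the step that cannot be done by conditioning on their messages alone, and providing their realized inputs to the \emph{receivers} does not repair the \emph{encoder}-side mismatch. The paper avoids this by proving the bound directly in the $K$-user channel (Section~\ref{sec:G-UB}): it conditions on $W_{[3:K]}$ and the noise differences $\tz_k=z_k-z_2$, uses the symmetry relation $y_{jt}=y_{2t}+(\sqrt{\SNR}-\sqrt{\INR})(x_{jt}-x_{2t})+\tz_{jt}$ to reconstruct the other outputs inside the entropy terms, and obtains a bound with the coherent-combining term $\bigl(\sqrt{\SNR}+(K-1)\sqrt{\INR}\bigr)^2$ and an additive $\tfrac{K-1}{4}$ --- weaker than the two-user bound at finite $\SNR$ but sufficient for the $\GDoF$.

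A second omission concerns $\alpha=1$, which is part of the statement. Showing that $d_{\FB}$ is ``not well-defined'' there requires exhibiting two scalings with $\alpha\to 1$ and different limits: the paper proves the converse $\Rs\leq\frac{1}{K}\log(1+K^2\SNR)+\frac{K-1}{2K}$ when $\INR=\SNR$ exactly (all outputs coincide up to noise differences, so one output essentially determines all messages), giving $d=1/K$, while $\INR=\beta\SNR$ with constant $\beta\notin(\tfrac12,2)$ also has $\alpha=1$ yet yields $d=1/2$ by Theorem~\ref{thm:Cap-Gaussian}. Your proposal only remarks heuristically that the alignment argument ``collapses'' at the crossover, which neither establishes the $1/K$ bound at $\INR=\SNR$ nor the resulting non-uniqueness of the limit.
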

We present the proof of Theorem~\ref{thm:DoF} in Section~\ref{sec:dof}. Note that the theorem above does not characterize the $\GDoF$ for $\alpha=1$. In fact for this regime, the $\GDoF$ is not well-defined and can get different values, depending on mutual growth of $\SNR$ and $\INR$. We refer the interested reader to Section~\ref{sec:dof} for a detailed discussion.

In order to demonstrate the benefit gained by output feedback, we present the following theorem from \cite{JV:K-IF:10}, which characterizes the $\GDoF$ for the FC-IC without feedback.

\begin{thm}[\cite{JV:K-IF:10}, Theorem~3.1]
The per-user $\GDoF$ for the $K$-user interference channel without feedback is given by
\begin{align*}
d_{\textrm{No } \FB}(\alpha) = \left\{ \begin{array}{ll}
1-\alpha & 0\leq \alpha\leq \frac{1}{2} \ \textrm{(noisy interference)}\\
\alpha & \frac{1}{2}\leq \alpha \leq \frac{2}{3}\ \textrm{(weak interference)}\\
1-\frac{\alpha}{2} & \frac{2}{3}\alpha<1 \ \textrm{(moderate interference)}\\
\frac{1}{K} & \alpha=1 \\
\frac{\alpha}{2} & 1< \alpha\leq 2 \ \textrm{(strong interference)}\\
1 & \alpha>2 \ \textrm{(very strong interference).}
\end{array}
\right.
\end{align*}
\label{thm:DoF-no-FB}
\end{thm}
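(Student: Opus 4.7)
The plan is to prove the converse and achievability regime by regime, matching the piecewise description. The cleanest organization is to split the $\alpha$-axis into (i) the two ``high-interference'' regimes $\alpha>2$ and $1<\alpha\le 2$, (ii) the degenerate point $\alpha=1$, (iii) the ``moderate'' regime $2/3\le \alpha<1$, (iv) the ``weak'' regime $1/2\le \alpha\le 2/3$, and (v) the ``noisy'' regime $0\le \alpha\le 1/2$. In each regime one shows a matching outer and inner bound on the symmetric sum rate, and divides by $K\cdot \tfrac12\log\SNR$ to extract $d_{\textrm{No }\FB}(\alpha)$.

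For the converse, the main ingredients are genie-aided and channel-enhancement arguments. In the very strong regime $\alpha>2$, I would argue that any receiver equipped with a genie giving $x_k(t)$ for all interferers can do no worse than the no-interference point-to-point channel, so $R\le \tfrac12\log(1+\SNR)$, yielding $d\le 1$. In the strong regime $1<\alpha\le 2$ each receiver can jointly decode all $K$ messages because interference is stronger than the desired signal, so the sum rate is bounded by the MAC sum-rate $\tfrac12\log(1+\SNR+(K-1)\INR)$, giving $d\le\alpha/2$. For $\alpha<1$ the backbone is an ETW-style bound: one provides the genie $s_k=\sum_{i\ne k}\sqrt{\INR}\,x_i+z_k$ (a noisy copy of the interference at $\r{k}$) to $\r{k}$ and then uses the standard identities to convert the sum of conditional entropies across users into a bound of the form $KR\le\tfrac12\log(1+\SNR+(K-1)\INR)+\tfrac{K-1}{2}\log(1+\SNR/\INR)$, which after dividing by $\tfrac{K}{2}\log\SNR$ produces exactly $1-\alpha/2$ and $1-\alpha$ in the appropriate subranges. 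The matching bound $d\le\alpha$ for $1/2\le\alpha\le 2/3$ comes from an additional ``Z-channel'' argument restricted to two users. The degenerate value at $\alpha=1$ comes from recognizing that in the symmetric equal-gain case every receiver sees $y_k=\sqrt{\SNR}(x_k+\sum_{i\ne k}x_i)+z_k$, so $K$ signals share a single one-dimensional direction; a Fano-style bound then gives $\sum_k R_k\le\tfrac12\log\SNR+o(\log\SNR)$, i.e.\ $d\le 1/K$.

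For achievability I would use regime-dependent coding. In the very strong regime each receiver decodes all interferers first (since their $\INR>\SNR^2$ makes this possible at rate $\tfrac12\log(1+\SNR)$ per user) and subtracts them, achieving $d=1$. In the strong regime, joint MAC-style decoding of desired plus interferers achieves $d=\alpha/2$. In the moderate regime $2/3\le\alpha<1$ a Han-Kobayashi decomposition with private power set at the noise floor ($P_{\rm priv}=\SNR^{-\alpha}$) and common messages jointly decoded at each receiver matches $1-\alpha/2$. In the weak regime $1/2\le\alpha\le 2/3$, achieving the ``peak'' $d=\alpha$ requires real/asymptotic interference alignment in the style of Motahari--Gharan--Maddah-Ali--Khandani, aligning the $K-1$ interferers into a lower-dimensional rational subspace at each receiver so that the desired signal occupies the complementary dimension. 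In the noisy regime $\alpha\le 1/2$ simple Gaussian coding with interference treated as noise achieves $d=1-\alpha$. At $\alpha=1$, plain TDMA among the $K$ users achieves $d=1/K$.

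The principal obstacle is the weak regime $1/2\le\alpha\le 2/3$, where both the converse (one must rule out gains beyond $d=\alpha$ despite having $K-1$ interferers) and the achievability (real interference alignment with number-theoretic separation) are the most delicate. The converse there rests on combining the Z-channel bound with the symmetry constraint, while the achievability requires the Khintchine--Groshev machinery to guarantee that aligned rational combinations remain resolvable at high $\SNR$. The remaining regimes reduce either to standard MAC/point-to-point bounds or to a direct Han-Kobayashi analysis and are more routine.
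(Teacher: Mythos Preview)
The paper does not prove this theorem. It is quoted verbatim from \cite{JV:K-IF:10} (Jafar--Vishwanath) purely for comparison with the feedback result in Theorem~\ref{thm:DoF}; there is no accompanying proof or sketch anywhere in the text or appendices. So there is nothing in the paper against which to check your argument, and any assessment has to be made relative to the original source rather than to this paper.

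That said, two parts of your sketch would not go through as written. First, the converse inequality you propose, $KR\le\tfrac12\log(1+\SNR+(K-1)\INR)+\tfrac{K-1}{2}\log(1+\SNR/\INR)$, normalizes to $d\le 1-\tfrac{K-1}{K}\alpha$, not $1-\alpha/2$; for $K\ge 3$ this is strictly below the claimed achievable value in the moderate regime and hence cannot be a valid outer bound. The actual \cite{JV:K-IF:10} converse for $2/3\le\alpha<1$ does not come from a single $K$-user genie sum of this form; it reduces to a two-user-type bound applied pairwise (so the coefficient in front of $\alpha$ stays $1/2$ regardless of $K$). Second, your achievability for the moderate regime via plain Han--Kobayashi with private power at the noise floor does not reach $1-\alpha/2$ once $K\ge 3$: the $K$ common messages at each receiver face a MAC sum-rate constraint that scales like $\max(1,\alpha)$, forcing the per-user common rate down by a factor of $K$. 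In \cite{JV:K-IF:10} the moderate regime, like the weak regime, is handled by layered real interference alignment so that the $K-1$ interfering common parts collapse into a single effective dimension; HK alone is insufficient there. The remaining regimes ($\alpha\le 1/2$, $\alpha>1$, and $\alpha=1$) in your outline are essentially correct.
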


The generalized degrees of freedom of the $K$-user interference channel with/without feedback are illustrated in Figure~\ref{fig:dof}. As derived in \cite{JV:K-IF:10}, the $\GDoF$ for the $K$-user no feedback case, is similar to that of $2$-user case \cite{ETW:08}, except for $\alpha=1$. Similarly, here we show that for the channel with feedback, the $\GDoF$ for the $K$-user case is the same as that of the $2$-user channel \cite{SuhTse11}, except for $\alpha=1$. At this particular point, the $\GDoF$ can be bounded from below and above by $\frac{1}{K}$ and $\frac{1}{2}$, respectively.

The following theorem characterizes the approximate capacity of the channel for arbitrary signal-to-noise ratio.
\begin{thm}
The symmetric capacity of the $K$ user interference channel with feedback with\footnote{A similar result can be shown when $\INR/\SNR \notin (1-\delta_1,1+\delta_2)$ where $\delta_1,\delta_2>0$ are constants. In that case the gap between the achievable rate and the upper bound may depend on $\delta_1$ and $\delta_2$. We refer the interested reader to the discussion at the end of Section~\ref{sec:dof} on the capacity behavior at $\INR \approx \SNR$. } 
\[
\frac{\INR}{\SNR} \notin \left(\frac{1}{2}, 2\right) 
\]
can be approximated by 
\begin{align}
\Cs \hspace{-1pt}\triangleq\hspace{-1pt} \frac{1}{4} \log (1+ \SNR + \INR )\hspace{-1pt}+\hspace{-1pt} \frac{1}{4} \log\hspace{-1pt} \left(\hspace{-1pt}1+\frac{\SNR}{1+\INR}\hspace{-1pt}\right)\hspace{-2pt}. 
\label{eq:C-G}
\end{align}
More precisely, the symmetric capacity is upper bounded by $C_{\mathrm{sym}}\leq \Cs+\frac{K-1}{4}+\frac{1}{2}\log K$. 
Moreover, there exists a coding scheme that can support any rate satisfying $\Rs\leq \Cs-\frac{1}{4}\log 16K^2(K+1)$.
\label{thm:Cap-Gaussian}
\end{thm}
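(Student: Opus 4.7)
My plan is to establish an outer bound on the pair sum-rate $R_i+R_j$ for an arbitrary pair $i\neq j$, then symmetrize to get the symmetric-capacity bound.  The template is the two-user Suh--Tse converse \cite{SuhTse11} adapted to the symmetric $K$-user feedback setting.  Starting from Fano's inequality,
\begin{align*}
n(R_i+R_j) &\leq I(W_i; y_i^n) + I(W_j; y_j^n) + n\epsilon_n,
\end{align*}
I would introduce a genie at $\r{i}$ that supplies the aggregate contribution $\sum_{\ell\neq i,j}\sqrt{\INR}\, x_\ell + z_i$ (the interference from all transmitters other than $\t{j}$, lumped with noise), and dually at $\r{j}$.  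Using the feedback encoding relation \eqref{eq:encoding} to express each $x_{kt}$ as a function of $(W_k, y_k^{t-1})$, the enhanced channel collapses to an effective two-user feedback interference channel with inputs $x_i, x_j$, to which the Suh--Tse argument applies directly.  Expanding the mutual informations and symmetrizing over the choice of the pair should give $2\Rs\leq \tfrac{1}{2}\log(1+\SNR+\INR)+\tfrac{1}{2}\log(1+\SNR/(1+\INR))$, i.e.\ the two $\tfrac{1}{4}\log$ terms in $\Cs$.  The additive slack $(K-1)/4+\tfrac{1}{2}\log K$ is then an entropy/variance bookkeeping term accounting for the $K-2$ genie-sum components and for the power-combining loss when bundling the remaining interferers into a single Gaussian.

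\textbf{Plan for the achievability.}  The coding scheme implements cooperative interference alignment in a block-Markov fashion of length $B$.  All transmitters share a nested lattice codebook $(\Lambda_c,\Lambda_s)$ tuned so that integer combinations of codewords remain codewords modulo the shaping lattice.  In block $b$, each $\t{k}$ transmits a fresh lattice codeword $u_k^{(b)}$ superposed with a ``relay'' codeword $v_k^{(b-1)}$ carried over from the previous block.  Because the $K-1$ interferers at $\r{k}$ are drawn from a common lattice, they align into a single effective codeword $v_k^{(b)}=\bigl[\sum_{i\neq k}u_i^{(b)}\bigr]\bmod\Lambda_s$; this is the interference-alignment step.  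Via the feedback link, $\t{k}$ subtracts its own contribution from $y_k^{(b)}$ and lattice-decodes $v_k^{(b)}$ at a rate sustainable by the Nazer--Gastpar computation-rate expression \cite{nazer2007computation}.  In block $b+1$, $\t{k}$ retransmits $v_k^{(b)}$ on a dedicated power layer, and $\r{k}$, using both blocks, cancels the aligned interference and is left with an almost single-user channel.  Properly splitting the transmit power between the fresh and cooperative layers yields exactly the two $\tfrac{1}{4}\log$ terms of $\Cs$: the $\tfrac{1}{4}\log(1+\SNR+\INR)$ comes from jointly decoding the fresh plus the cooperative codewords, while the $\tfrac{1}{4}\log(1+\SNR/(1+\INR))$ captures the residual fresh layer with aligned interference treated as noise.

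\textbf{Main obstacle.}  The most delicate part is the lattice-decoding analysis at the transmitter: $\sum_{i\neq k}u_i^{(b)}$ is a sum of $K-1$ independently dithered lattice codewords, and the standard Erez--Zamir / Nazer--Gastpar computation-rate bounds force a specific $K$-dependent constraint on the common codebook rate.  Verifying that the codebook rate can be chosen close to $\Cs$ while simultaneously allowing the transmitter to decode the aligned sum, and while allowing each receiver to peel off the relayed combination from the next block, is where the $K$-dependent gaps arise (the $(K-1)/4+\tfrac{1}{2}\log K$ slack in the converse, and the $\tfrac{1}{4}\log\bigl(16K^2(K+1)\bigr)$ slack in the achievability).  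The excluded regime $\INR/\SNR\in(\tfrac{1}{2},2)$ is precisely where the transmitter's self-contribution and the interference power are comparable, so that after feedback the transmitter can no longer reliably peel off its own signal before lattice-decoding the interference sum; handling this boundary regime cleanly would require a different scheme and is left outside the statement.
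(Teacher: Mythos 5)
Your proposal has the right high-level ingredients (a genie-aided pairwise converse and a lattice-based align-and-forward scheme), but both halves have genuine gaps relative to what is actually needed.

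On the converse, the genie you propose does not work as stated. If $\r{i}$ is handed the lumped signal $\sqrt{\INR}\sum_{\ell\neq i,j}x_\ell+z_i$ (and dually at $\r{j}$), then from $y_2^T$, the genie at $\r{2}$, and $x_2^T$ (reconstructible from $W_2,y_2^{t-1}$) one recovers $\sqrt{\INR}\,x_1^T$ \emph{exactly}, so the conditional entropy term that is supposed to produce the $\frac{1}{4}\log\bigl(1+\frac{\SNR}{1+\INR}\bigr)$ contribution degenerates: lumping the receiver noise into the genie removes precisely the additive noise that makes that term finite and meaningful. Moreover, the claim that the enhanced channel "collapses to an effective two-user feedback IC to which Suh--Tse applies directly" is not justified, because through feedback the remaining $K-2$ inputs are causally correlated with $x_i,x_j$; they are not a memoryless nuisance that can be absorbed into noise. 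The paper's converse instead conditions on the \emph{messages} $W_{[3:K]}$ and supplies only the noise differences $\tz_j=z_j-z_2$, $j\geq 3$, as side information; symmetry gives $y_{jt}=y_{2t}+(\sqrt{\SNR}-\sqrt{\INR})(x_{jt}-x_{2t})+\tz_{jt}$, so all other outputs (hence inputs, via the encoding functions) are reconstructible from $(y_2^T,\tz_{[3:K]}^T,W_{[2:K]})$, which is exactly what bounds $h\bigl(y_1^T\mid y_2^T,\tz_{[3:K]}^T,W_{[2:K]}\bigr)$ by $\frac{T}{2}\log\bigl(1+\frac{\SNR}{1+\INR}\bigr)$ plus a constant; the slacks $\frac{K-1}{4}$ and $\frac{1}{2}\log K$ then come from the entropies of the $K-2$ noise differences and from $(\sqrt{\SNR}+(K-1)\sqrt{\INR})^2\leq K^2(\SNR+\INR)$. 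Note also that the upper bound holds for all $\INR/\SNR$; only the achievability needs the excluded interval.

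On the achievability, three points. First, in the paper's weak-interference scheme each transmitter removes only the excess direct gain and decodes the mod-$\Lambda_c$ sum of \emph{all} $K$ lattice codewords, so that in the second block every transmitter sends the \emph{same} codeword $\bc_0$, received coherently and strippable from both blocks; in your version each $\t{k}$ relays a different sum $\sum_{i\neq k}u_i$, so at $\r{k}$ the relayed layer is effectively $(K-1)u_k+(K-2)\sum_{i\neq k}u_i$ and the clean cancellation you describe needs extra (unaddressed) dither/scaling bookkeeping. You also omit the two private Gaussian layers ($\bc_{k1},\bc_{k2}$, received at noise level elsewhere), which supply the $\frac{1}{2}\log\bigl(1+\frac{\SNR}{K\INR}\bigr)$ part of the rate. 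Second, your single lattice scheme is presented as covering the whole claimed range, but for $\INR\geq 2\max(\SNR,1)$ the paper uses an entirely different non-lattice scheme: plain Gaussian codewords, feedback-based retransmission of the scaled residual $\gamma\bigl(\sqrt{\INR}\sum_i\bc_i+\bz_{k1}\bigr)$, and zero-forcing across the two blocks (the $K$-user surrogate for Alamouti); interference is never decoded there. Third, your explanation of the excluded interval $(\tfrac12,2)$ is conceptually wrong: the transmitter's self-processing only removes the gain difference $(\sqrt{\SNR}-\sqrt{\INR})\bx_{k1}$ and works for any gains; the interval is excluded because near $\INR\approx\SNR$ the capacity itself degrades (the per-user $\GDoF$ drops to $1/K$, as in the $m=n$ deterministic case), so no scheme can be within a constant gap of $\Cs$ there --- it is not a fixable technicality of the lattice-decoding step.
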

We will present the achievability part of Theorem~\ref{thm:Cap-Gaussian} in Section~\ref{sec:G-Ach}. The proof of the converse part can be found in Section~\ref{sec:G-UB}.

\section{The Deterministic Model}
\label{sec:det}
In this section we study the problem of interest in a deterministic framework introduced in \cite{ADT:11:J}. The key point in this model  is to focus on signal interactions instead of the additive noise, and obtain insight about both coding schemes and outer bounds for the original problem. 

The intuition behind this approach is that the noise is modelled by a deterministic operation on the received signal which splits the received signal into a completely useless part and a completely noiseless part. The part of the received signal below the noise level is completely useless since it is corrupted by noise. However, the part above the noise level is assumed to be not affected by noise and can be used to retrieve information.

Let $p$ be any prime number and $\F$ be the finite field over the set $\{0,1,\dots,p-1\}$ with sum and product operations modulo $p$.  Moreover, define 
\begin{align*}
n=\lfloor \log_p \SNR\rfloor \quad \textrm{and}\quad  m=\lfloor \log_p \INR\rfloor.
\end{align*}
Each received signal can be mapped into a $p$-ary stream. Let $X_k\in\F^q$ and $Y_k\in \F^q$ be the $p$-ary expansion of the transmit and received signal by user $k$, respectively, where $q=\max\{m,n\}$. The shift linear deterministic channel model for this network can be written as 
\begin{align}
Y_k = D^{q-n} X_k + \sum_{i\neq k} D^{q-m} X_i,
\label{eq:ch-model-DET}
\end{align}
where all the operations are performed modulo $p$. Here, $D$ is the shift matrix, defined as
\begin{align*}
D=\begin{bmatrix}
0 & 0 &  0 & \cdots & 0 & 0\\
1 & 0 &  0 & \cdots & 0 & 0\\
0 & 1 & 0 & \cdots & 0 & 0\\
\vdots & & & \ddots & & \vdots\\
0 & 0 & 0 & \cdots & 1 & 0
\end{bmatrix}_{q\times q}.
\end{align*}

The following theorem characterizes the symmetric capacity of the deterministic network introduced above. In the rest of this section, we prove this theorem by first deriving an upper bound on the symmetric capacity, and then proposing coding schemes for different interference regimes. The ideas arising in this section will be later used when we focus on the Gaussian network in Sections~\ref{sec:G-UB} and \ref{sec:G-Ach}.

\begin{thm}
The symmetric feedback capacity of the linear deterministic  $K$-user fully connected interference channel with parameters $n$ and $m$ is given by
\begin{align}
\Rs=\left\{ \begin{array}{ll}
n-\frac{m}{2} & n>m \ \textrm{(weak interference)},\\
\frac{n}{K} &  m=n,\\
\frac{m}{2} & n<m\ \textrm{(strong interference)}.
\end{array}
\right.
\end{align}
\label{thm:det}
\end{thm}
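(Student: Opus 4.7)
The plan is to prove Theorem~\ref{thm:det} by dividing into the three parameter regimes ($n > m$, $n = m$, $n < m$) and establishing a matching upper bound and achievability in each. Since the Gaussian converse and achievability in Sections~\ref{sec:G-UB} and \ref{sec:G-Ach} will mirror the deterministic analysis, I aim for tools -- signal-level decomposition, per-pair genie bounds, and linear cooperative interference alignment -- that lift cleanly to the noisy model.

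For the upper bound in the non-degenerate regimes $n \ne m$, my approach is to reduce the $K$-user bound to a $2$-user bound. Fix any pair $(i,j)$ and provide a genie that reveals the $K - 2$ other codewords $\{X_\ell\}_{\ell \notin \{i,j\}}$ to $\r{i}, \r{j}$; by also handing these sequences to $\t{i},\t{j}$ (which only helps them), the residual channel between the pair is an ordinary $2$-user deterministic feedback IC. A standard single-letterization using Fano's inequality together with the causal encoding law $X_{k,t} = g_{k,t}(W_k, Y_k^{t-1})$ from~\eqref{eq:encoding} then gives the Suh--Tse-style sum-rate bounds
\begin{align*}
R_i + R_j \le 2n - m \quad (n > m), \qquad R_i + R_j \le m \quad (n < m),
\end{align*}
and, specializing to the symmetric point, $\Rs \le n - m/2$ and $\Rs \le m/2$. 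For $n = m$ the shifts in~\eqref{eq:ch-model-DET} disappear, so every receiver observes the \emph{same} vector $Y = \sum_i X_i \in \F^n$, and a cut-set bound across this common observation immediately yields $K\Rs \le H(Y) \le n$.

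For achievability, the $n = m$ case is handled by trivial $K$-slot TDMA delivering $n$ fresh bits per user every $K$ slots. The interesting constructions are in the weak and strong regimes. In weak interference ($n > m$), the plan is to implement the cooperative interference alignment scheme that motivates the paper: each user sends $n - m$ \emph{private} bits at the lowest $n - m$ signal levels (below the reach of any cross link) plus $m/2$ \emph{common} bits per slot at the top $m$ levels, across two consecutive slots. In slot one, the top $m$ levels at $\r{k}$ carry $X_k^{\mathrm{c}} + \sum_{i \ne k} X_i^{\mathrm{c}}$ over $\F^m$; since $X_k^{\mathrm{c}}$ is self-known, $\t{k}$ uses its feedback signal to recover the aligned interference $I_k = \sum_{i \ne k} X_i^{\mathrm{c}}$, which it retransmits in slot two so that $\r{k}$ can subtract its slot-one interference and decode. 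Closure of $\F$ under addition is what allows decoding the \emph{sum} of $K-1$ interferers even though the individual summands are not individually recoverable, and is the finite-field analogue of the lattice-code property invoked later in the Gaussian scheme. The strong regime $n < m$ admits a symmetric treatment in which the roles of direct and cross signals are swapped and the two-slot exchange is used to expose a clean copy of the intended signal through the strong cross links.

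The step I expect to be the main obstacle is verifying that slot-two retransmission in the cooperative scheme really cancels the residual interference at each $\r{k}$: the interferers re-appearing at $\r{k}$ in slot two are themselves the (different) aligned sums $\{I_j\}_{j \ne k}$ computed by the other transmitters, so one must check that their $\F$-linear superposition, combined with the direct-link signal carrying $I_k$, leaves a clean copy of the desired message at the intended receiver and exactly doubles the effective common-level rate. This is ultimately a linear-algebra calculation over $\F$ that leans on the symmetry of the channel, and it is precisely the structural property that the paper will later import into the Gaussian setting through nested lattice codes.
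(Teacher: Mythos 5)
Your overall architecture (genie-style converse for $n\neq m$, common-output cut-set at $n=m$, two-slot feedback alignment plus TDMA for achievability) matches the paper, but the converse step you rely on does not hold up as stated. In this channel with feedback, every input sequence $X_\ell^T$ is, through the feedback recursion, a deterministic function of \emph{all} $K$ messages, so the sequences $\{X_\ell^T\}_{\ell\notin\{i,j\}}$ you reveal to the pair are correlated with $(W_i,W_j)$ and with the pair's own signals. After conditioning on them the residual system is \emph{not} an ordinary $2$-user deterministic feedback IC with encoders of the form \eqref{eq:encoding}; it is a genie-augmented channel, and the Suh--Tse sum-rate bound cannot be invoked as a black box for it (nor does revealing only the messages $W_\ell$ help, since then the pair cannot reconstruct, hence cannot cancel, the other users' feedback-dependent inputs). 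This is precisely where the work lies: the paper never reduces to a $2$-user channel, but instead conditions on $W_{[3:K]}$ and $Y_2^T$ and carries out the entropy computation inside the $K$-user model, using the causal encoding law to insert $X_{[2:K]t}$ into the conditioning (step $(b)$ of \eqref{eq:term4}) and the channel structure to show $H(Y_1^T\mid W_{[2:K]},Y_2^T)\le T(n-m)^+$, which yields $R_1+R_2\le\max(m,2n-m)$ and hence $\Rs\le\max(n-\tfrac{m}{2},\tfrac{m}{2})$. Your $m=n$ bound (all outputs identical, so $K\Rs\le n$) coincides with the paper's argument and is fine.

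On achievability, the step you yourself flag as the "main obstacle" is exactly the content that is missing, and it is not a formality. First, the level bookkeeping is off: at $\r{k}$ the aligned interference $\sum_{i\neq k}$ (top-$m$ symbols of the others) appears on the \emph{bottom} $m$ levels of $Y_k$, superposed on $\t{k}$'s own lower-level symbols (and for $n/2<m<n$ these include common symbols), not on the top $m$ levels as a sum with $X_k^{\mathrm{c}}$. Second, after every transmitter forwards its aligned sum, the slot-two interference at $\r{k}$ is $\sum_{i\neq k}S_{\sim i}=(K-1)S_k+(K-2)S_{\sim k}$, so the receiver must invert a $2n\times 2n$ (resp.\ $2m\times 2m$ in the strong regime) coefficient matrix over $\F$ whose entries involve $K-1$ and $K-2$ modulo $p$. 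In the strong regime this matrix is singular unless $K\not\equiv 1\pmod{p}$ (e.g.\ it fails for $p=2$, $K=3$), so the freedom to choose the prime $p$ is an essential ingredient of the proof, which your sketch never addresses; your strong-interference scheme is only described as "symmetric," whereas the actual construction (Appendix~\ref{app:det-ach}) has each transmitter feed forward the entire vector $S_{\sim k}(1:m)$ and hinges on this rank verification. Without the explicit solvability argument in both regimes, the claimed rates $n-\tfrac{m}{2}$ and $\tfrac{m}{2}$ are not established.
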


\begin{remark}
From the rate expression in Theorem~\ref{thm:det} one can easily see that the normalized feedback capacity of the channel under the linear deterministic model is given by 
\begin{align*}
\frac{\Rs}{n}=\left\{ \begin{array}{ll}
1-\frac{1}{2}\left(\frac{m}{n}\right) & \frac{m}{n}<1, \\
\frac{1}{K} &  \frac{m}{n}=1,\\
\frac{1}{2} \left(\frac{m}{n}\right) & \frac{m}{n}>1,
\end{array}
\right.
\end{align*}
which is analogous to the $\GDoF$ expression in Theorem~\ref{thm:DoF}, by noting that $m/n$ is analogous to $\alpha$ for the Gaussian setting. 
\end{remark}

We may also study a generalized version of the symmetric model introduced in \eqref{eq:ch-model-DET}. As we will see in the rest of this section, the symmetric topology the current model allows a simple interference alignment at the receivers. A natural generalization of this model assigns a random sign to each channels, while channel gains are kept symmetric. More precisely, in this model, called  \emph{quasi-symmetric $K$-user fully connected interference channel}\footnote{We wish to thank the anonymous reviewer for suggesting this model.},  the gain of all the direct links are identical and all the cross links have identical gains, but each link has a random sign which captures random phase in the Gaussian model. Note that, without loss of generality, we may assume the sign of all direct links are positive, and write the channel model as 
\begin{align}
Y_k = D^{q-n} X_k + \sum_{i} \lambda_{ki} D^{q-m} X_i,
\label{eq:model-det-qsym}
\end{align}
where $\lambda_{ki}\in\{-1,+1\}$ for $i\neq k$ captures the sign of the cross link from $\t{i}$ to $\r{k}$. 
The following theorem states that a similar result as Theorem~\ref{thm:det} holds for $3$-user network.

\begin{thm}
The symmetric feedback capacity of the linear deterministic  quasi-symmetric $3$-user fully connected interference channel introduced in \eqref{eq:model-det-qsym} with parameters $n$ and $m$ is given by
\begin{align*}
\Rs=\left\{ \begin{array}{ll}
n-\frac{m}{2} & n>m \ \textrm{(weak interference)},\\
\frac{n}{3} &  m=n \textrm{ and  $\Lambda+\mathbf{I}$ is singular},\\
\frac{n}{2} &  m=n \textrm{ and  $\Lambda+\mathbf{I}$ is non-singular},\\
\frac{m}{2} & n<m\ \textrm{(strong interference)},
\end{array}
\right.
\end{align*}
where $\Lambda$ is the channel sign matrix with $\Lambda_{ij}=\lambda_{ij}$ for $i\neq j$ and $\Lambda_{ii}=0$ for $i=1,2,3$.
\label{thm:det-qsym}
\end{thm}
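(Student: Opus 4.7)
For the weak ($n>m$) and strong ($n<m$) interference regimes, I would lift the proof of Theorem~\ref{thm:det} with only cosmetic modifications. The signs $\{\lambda_{ki}\}$ do not change the capacity in these regimes: the genie-aided converse bounds from Theorem~\ref{thm:det} depend only on the shift levels $n$ and $m$, not on signs, while on the achievability side the scheme uses structured (linear, lattice-like) codebooks whose signed sums remain codewords, so the interference-decoding step is sign-robust. Thus these two boundary cases inherit verbatim the rates of Theorem~\ref{thm:det}.

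The new content lies at $m=n$, which I plan to split by the singularity of $\Lambda+\mathbf{I}$. For non-singular $\Lambda+\mathbf{I}$, my achievability is a two-round scheme. In Round~$1$, each $\t{k}$ transmits a fresh message $s_k\in\F^n$, so $\r{k}$ observes $Y_k=[(\Lambda+\mathbf{I})\mathbf{s}]_k$, and $\t{k}$ then learns $Y_k$ through feedback. In Round~$2$, each $\t{k}$ forwards a function of its observed $Y_k$; the natural choice $Y_k-s_k$ (the residual interference at $\r{k}$) produces a second observation $Y_k^{(2)}=[(\Lambda+\mathbf{I})\Lambda\,\mathbf{s}]_k$ at $\r{k}$. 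Because $\Lambda+\mathbf{I}$ is non-singular, Cayley--Hamilton guarantees that $s_k$ lies in the $\F$-span of $Y_k$ and $Y_k^{(2)}$, with explicit coefficients given by the minimal polynomial of $\Lambda+\mathbf{I}$, yielding rate $n/2$ per user. For the matching upper bound, I would provide $(W_3,Y_3^{T-1})$ as genie side information to $\r{1}$ and $\r{2}$, collapsing the subproblem to a $2$-user interference channel with feedback at $\alpha=1$, whose sum capacity is $n$ by \cite{SuhTse11}. This gives $R_1+R_2\le n$ and by symmetry $\Rs\le n/2$.

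For singular $\Lambda+\mathbf{I}$, the achievability $\Rs\ge n/3$ is inherited from the symmetric-case scheme of Theorem~\ref{thm:det} (for instance, a simple time-sharing argument). The upper bound $\Rs\le n/3$ exploits the fact that the received-signal triple is constrained to a proper subspace: there exist $(\beta_1,\beta_2,\beta_3)\ne 0$ with $\sum_k\beta_k Y_k(t)\equiv 0$ at every $t$. In the canonical rank-$1$ sub-case (which includes all reciprocal sign patterns with an even number of negative off-diagonal entries, and in particular the fully symmetric model), every $Y_k$ is proportional to a single scalar stream, so $H(Y_1,Y_2,Y_3)\le n$ per channel use and a cut-set/Fano argument delivers $R_1+R_2+R_3\le n$, whence $\Rs\le n/3$.

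I expect two technical obstacles. First, in the non-singular achievability, the two-round decoding works cleanly only when the minimal polynomial of $\Lambda+\mathbf{I}$ has degree~$2$; for degree-$3$ minimal polynomials one either needs a third round (and then layers fresh messages across rounds to keep the per-user rate at $n/2$) or a more carefully structured scheme that exploits the specific algebraic relations among the rows of $\Lambda+\mathbf{I}$. Second, the singular upper bound in the rank-$2$ case (which the non-reciprocal model permits) is subtle: a direct joint-entropy argument only delivers $\Rs\le 2n/3$, and closing the gap to $n/3$ requires either an alignment-style argument that combines per-pair bounds $R_i+R_j\le n$ with the specific singularity structure, or a restriction to reciprocal patterns where singularity forces rank~$1$.
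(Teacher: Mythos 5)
Your converse for $m\neq n$ and for the non-singular $m=n$ case is sound (the paper simply reuses the bound \eqref{eq:Rsym-m-neq-n}, whose derivation is sign-independent since each $\lambda_{ki}=\pm1$ is known), and time-sharing for the singular case is fine. The genuine gaps are on the achievability side. The scheme of Theorem~\ref{thm:det} does \emph{not} port ``verbatim'' to the weak and strong regimes: with random signs, if each transmitter simply forwards its received interference $I_k=\sum_{i\neq k}\lambda_{ki}S_i$, then $\r{1}$ observes in the second slot $\lambda_{12}I_2+\lambda_{13}I_3=(\lambda_{12}\lambda_{21}+\lambda_{13}\lambda_{31})S_1+\lambda_{13}\lambda_{32}S_2+\lambda_{12}\lambda_{23}S_3$, which is collinear with its first-slot interference $\lambda_{12}S_2+\lambda_{13}S_3$ only if $\lambda_{23}=\lambda_{32}$ (and similarly at the other receivers). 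Otherwise each level of the second slot introduces a new independent interference unknown and the $2n$ (resp.\ $2m$) equations no longer determine the desired symbols, so $n-\tfrac{m}{2}$ (resp.\ $\tfrac{m}{2}$) is not achieved. The obstruction is the alignment across the two slots, not the decodability of an already-aligned sum, so your ``signed sums of codewords are codewords'' remark does not address it. The paper's fix is to transmit weighted combinations $A_kS_k+B_kI_k$ and to show (Lemma~\ref{lm:suf-cond}) that alignment at all receivers holds whenever diagonal $A,B,U,V$ solve \eqref{eq:suf-cond}, together with (Lemma~\ref{lm:q-sym-K3}) the existence, for every $3\times3$ sign matrix, of such a solution with the nonvanishing coefficients needed to make the resulting linear systems full rank; this single mechanism covers weak, strong, and non-singular $m=n$. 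Your Cayley--Hamilton two-round scheme for the non-singular case needs the row-wise identity $e_k^\prime=a\,e_k^\prime(\Lambda+\mathbf{I})+b\,e_k^\prime(\Lambda+\mathbf{I})\Lambda$, which nonsingularity alone does not give: for $\lambda_{12}=1,\lambda_{13}=-1,\lambda_{21}=-1,\lambda_{23}=1,\lambda_{31}=1,\lambda_{32}=-1$ the matrix $\Lambda+\mathbf{I}$ is invertible (determinant $4$, any odd $p$), yet for $k=1$ the two available rows are $(1,1,-1)$ and $(-2,2,0)$, whose span does not contain $(1,0,0)$; so your scheme as stated fails there, and the ``third round with layered messages'' you gesture at is exactly the unproved part.

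The second gap is the converse for singular $\Lambda+\mathbf{I}$ of rank $2$, which you explicitly leave open at $\Rs\le 2n/3$. The paper closes it with a sequential reconstruction argument rather than a joint-entropy bound: rank $2$ for a $\pm1$ sign pattern forces two rows of $\Lambda+\mathbf{I}$ to coincide up to sign, say rows $1$ and $2$, so $Y_2^T$ is a deterministic function of $Y_1^T$. Then from $(Y_1^T,Y_2^T)$ Fano yields $W_1,W_2$; the feedback encoders $x_{kt}=g_{kt}(W_k,y_k^{t-1})$ then reconstruct $X_1^T,X_2^T$; since $m=n$ the channel equation at $\r{1}$ then determines $X_3^T$, hence $Y_3^T$, and Fano yields $W_3$. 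This gives $R_1+R_2+R_3\le n$ exactly as in \eqref{eq:UB-n-equal-m} (and as in the example of Section~\ref{sec:result}), hence $\Rs\le n/3$, with no restriction to reciprocal sign patterns. That ``decode two messages, reconstruct their inputs, reconstruct the third user's output, decode the third message'' chain is the missing idea; without it your case analysis does not establish the theorem for all sign matrices, which is what the statement requires.
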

In the following we present an example to illustrate the reason for loss in $\GDoF$ for singular $\Lambda+\mathbf{I}$. The proof of Theorem~\ref{thm:det-qsym} can be found in Appendix~\ref{app:det-qsym}. We will also show that this result can be generalized to arbitrary $K$ provided that $\Lambda$ satisfies certain conditions. Extension of this result to the quasi-symmetric Gaussian channel would be straight-forward from the coding scheme in Section~\ref{sec:G-Ach}, and we skip it in sake of brevity. 

 \begin{figure*}[t]
\begin{center}
 	\psfrag{t1}[Bc][Bc]{$\t{1}$}
	\psfrag{t2}[Bc][Bc]{$\t{2}$}
	\psfrag{t3}[Bc][Bc]{$\t{3}$}
	\psfrag{r1}[Bc][Bc]{$\r{1}$}
	\psfrag{r2}[Bc][Bc]{$\r{2}$}
	\psfrag{r3}[Bc][Bc]{$\r{3}$}
	\psfrag{a1}[Bc][Bc]{$a_1$}
	\psfrag{a2}[Bc][Bc]{$a_2$}
	\psfrag{a3}[Bc][Bc]{$a_3$}
	\psfrag{a4}[Bc][Bc]{$a_5$}
	\psfrag{b1}[Bc][Bc]{$b_1$}
	\psfrag{b2}[Bc][Bc]{$b_2$}
	\psfrag{b3}[Bc][Bc]{$b_3$}
	\psfrag{b4}[Bc][Bc]{$b_5$}
	\psfrag{c1}[Bc][Bc]{$c_{1}$}
	\psfrag{c2}[Bc][Bc]{$c_{2}$}
	\psfrag{c3}[Bc][Bc]{$c_{3}$}
	\psfrag{c4}[Bc][Bc]{$c_5$}
	\psfrag{T1}[Bc][Bc]{$T=1$}
	\psfrag{T2}[Bc][Bc]{$T=2$}
	\psfrag{aa2}[Bc][Bc]{$a_{2}$}
	\psfrag{aa3}[Bc][Bc]{$a_{3}+(b_1+c_1)$}
	\psfrag{aa4}[Bc][Bc]{$a_4$}
	\psfrag{aa5}[Bc][Bc]{$2a_1+(b_1+c_1)+a_5$}
	\psfrag{bb2}[Bc][Bc]{$b_{2}$}
	\psfrag{bb3}[Bc][Bc]{$b_{3}+(a_1+c_1)$}
	\psfrag{bb4}[Bc][Bc]{$b_4$}
	\psfrag{bb5}[Bc][Bc]{$2b_1+(a_1+c_1)+b_5$}
	\psfrag{cc2}[Bc][Bc]{$c_{2}+(a_1+b_1)$}
	\psfrag{cc3}[Bc][Bc]{$c_{3}+(a_2+b_2)$}
	\psfrag{cc4}[Bc][Bc]{$c_4$}
	\psfrag{cc5}[Bc][Bc]{$2c_1+(a_1+b_1)+c_5$}
	\psfrag{aaa2}[Bc][Bc]{$(b_1+c_1)$}
	\psfrag{aaa3}[Bc][Bc]{$a_4$}
	\psfrag{bbb2}[Bc][Bc]{$(a_1+c_1)$}
	\psfrag{bbb3}[Bc][Bc]{$b_4$}
	\psfrag{ccc2}[Bc][Bc]{$(a_1+b_1)$}
	\psfrag{ccc3}[Bc][Bc]{$c_4$}
	\psfrag{aa4}[Bc][Bc]{$a_4$}
	\psfrag{aa5}[Bc][Bc]{$2a_1+(b_1+c_1)+a_5$}
\includegraphics[width=0.85\textwidth]{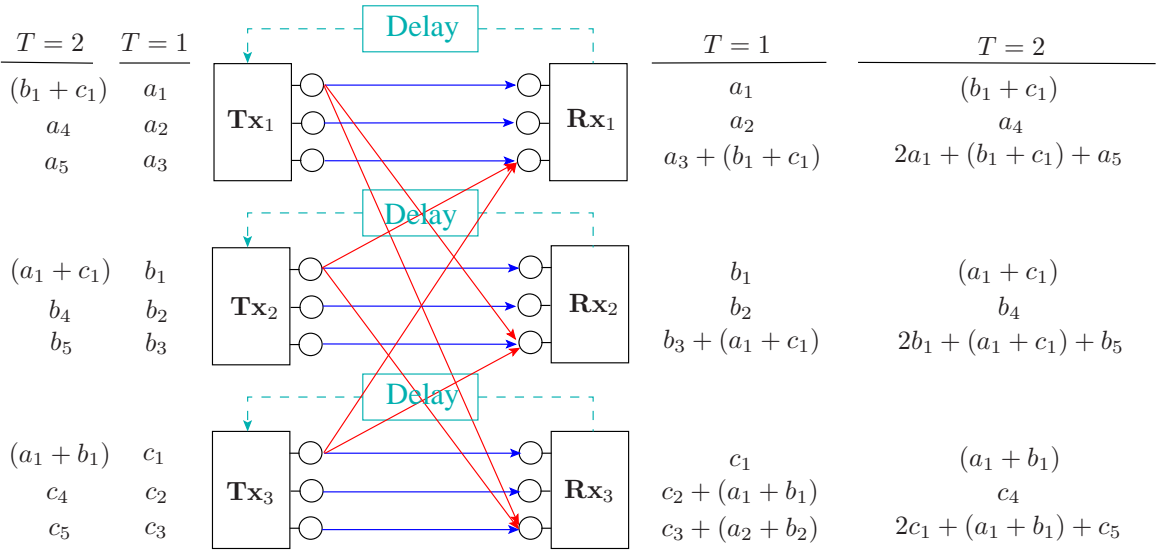}
\end{center}
\caption{Coding scheme for the linear deterministic model in the weak interference regime, for $K=3$, $n=3$, and $m=1$.}
\label{fig:det_weak}
\end{figure*}

\begin{example}
Consider a network with $K=3$ users, and sign matrix given by 
\begin{align*}
\Lambda=\begin{bmatrix}
0 & -1 & 1\\
1 & 0 &-1 \\
1 & -1 & 0 
\end{bmatrix}. 
\end{align*}
It is clear that the first and third rows of $\Lambda+\mathbf{I}$ are identical, and hence this matrix is singular. The channel model for this network for $m=n$ can be written as
\begin{align*}
Y_1&=X_1-X_2+X_3,\\
Y_2&=X_1+X_2-X_3,\\
Y_3&=X_1-X_2+X_3,
\end{align*}
in which $Y_1=Y_3$. Consider an arbitrary reliable coding scheme with block length $T$ for this network. Having the output of $\r{1}$ over the whole block, one can find 
\[
Y_1^T\rightarrow W_1 \rightarrow X_1^T=g_1^T(W_1, Y_1^T).
\] 
Similarly 
\[Y_1^T\rightarrow Y_3^T=Y_1^T \rightarrow W_3 \rightarrow X_3^T=g_3^T(W_3, Y_3^T).\] 
Therefore,  
\[X_2^T=-Y_1^T+X_1^T+X_3^T,\] 
and 
\[Y_2^T=X_1^T+X_2^T-X_3^T\] 
can be found from $Y_1^T$, and finally $W_2$ can be decoded from $Y_2^T$. In other words, having $Y_1^T$, all three messages can be decoded, i.e., 
\begin{align*}
R_1+R_2+R_3 &\leq H(W_1,W_2,W_3) \\
&\leq I(W_1+W_2+W_3; Y_1^T)+T\epsilon_T \\
&\leq H(Y_1^T) +T\epsilon_T \leq T(n+\epsilon_T),
\end{align*}
which results in $\Rs\leq n/3$, which is the same rate claimed in Theorem~\ref{thm:det-qsym}.  Achievability of this rate using time-sharing scheme is clear. 
\end{example}

\subsection{Encoding Scheme}
In the following we present a transmission scheme that can achieve the rate claimed in Theorem~\ref{thm:det}.  We first demonstrate the proposed scheme in two examples with specific parameters, through which the basic ideas and intuitions are transparent. Although generalization of the proposed coding strategy for arbitrary $n$ and $m$ is straight-forward, we present the scheme and its analysis in Appendix~\ref{app:det-ach} in sake of completeness. \\[-1mm]

\paragraph{Weak Interference Regime $(m<n)$}

The goal is to achieve $\Rs=n-\frac{m}{2}$ bits per user. We propose an encoding that operates on  a block of length $2$. The basic idea can be seen from Fig.~\ref{fig:det_weak}, wherein the coding scheme is demonstrated for $n=3$ and $m=1$. 

For these specific parameters, we have $\Rs=5/2$. As it is shown in Fig.~\ref{fig:det_weak}, the proposed coding scheme is able to convey five intended symbols from each transmitter to its respective receiver in two channel uses. The information symbols intended for $\r{1}$ are denoted by $a_1,a_2,a_3,a_4, a_5$. Each transmitter sends three fresh symbols in its first channel use. Receivers get two interference-free symbols, and one more equation, including their intended symbol as well as interference. The output signals are sent to the transmitters over the feedback link, in order to be used for the next transmission. In the second channel use, each transmitter forwards the interfering parts of its received feedback on its top level. The two lower levels will be used to transmit the remaining fresh symbols. 

Now, consider the received signals at $\r{1}$ in  two channel uses. It has received $6$ linearly independent equations, involving $7$ variables, which seems to be unsolvable at the first glance. However, we do not need to decode $b_1$ and $c_1$ individually. Instead, we can solve the system of linear of equations in $a_1$, $a_2$, $a_3$, $a_4$, $a_5$, and $(b_1+c_1)$,  which can be solved for the intended variables.  Hence, a per-user rate of $5/2$ symbols/channel-use is achievable with feedback. \\[-1mm]

\paragraph{Strong Interference Regime $(m > n)$} In this section we present an encoding scheme which can support a symmetric rate of $\Rs=\frac{m}{2}$. Again we focus on specific parameters, $n=1$ and $m=3$, which implies $\Rs=3/2$. 

As shown in Fig.~\ref{fig:det-strong}, the proposed coding strategy delivers three intended symbols to each receiver in two channel uses. In the first channel use, each transmitter sends its fresh symbols to its respective receiver. However, due to the strong interference, receivers are not able to decode any part of their intended symbols, and can only send their received signals to their respective transmitters through the feedback links. Each transmitter then removes its own contribution from the received signal, and forwards the remaining over the second channel use. Similar to the weak interference regime, at the end of the transmission each receiver  has $6$ equations, involving three intended symbols ($a_1$, $a_2$ and $a_3$ for $\r{1}$), and three interfering symbols ($b_1+c_1$, $b_2+c_2$, and $b_3+c_3$ for $\r{1}$), which can be solved. Note that the system of linear equations might not be linearly independent, depending of $p$, the field size. In particular, for these specific parameters, operating in the binary field ($p=2$), the coefficient of $a_3$ becomes zero, and therefore $a_3$ cannot be decoded from the received equations. However, $p$ is an arbitrary parameter, which can be carefully chosen to provide a full-rank coefficient matrix. Therefore, a per-user rate of $3/2$ symbols/channel-use is achieved with feedback. \\[-1mm]
\begin{figure*}[t]
\begin{center}
 	\psfrag{t1}[Bc][Bc]{$\t{1}$}
	\psfrag{t2}[Bc][Bc]{$\t{2}$}
	\psfrag{t3}[Bc][Bc]{$\t{3}$}
	\psfrag{r1}[Bc][Bc]{$\r{1}$}
	\psfrag{r2}[Bc][Bc]{$\r{2}$}
	\psfrag{r3}[Bc][Bc]{$\r{3}$}
	\psfrag{a1}[Bc][Bc]{$a_1$}
	\psfrag{a2}[Bc][Bc]{$a_2$}
	\psfrag{a3}[Bc][Bc]{$a_3$}
	\psfrag{a4}[Bc][Bc]{$(b_1+c_1)$}
	\psfrag{a5}[Bc][Bc]{$(b_2+c_2)$}
	\psfrag{a6}[Bc][Bc]{$(b_3+c_3)$}
	\psfrag{b1}[Bc][Bc]{$b_1$}
	\psfrag{b2}[Bc][Bc]{$b_2$}
	\psfrag{b3}[Bc][Bc]{$b_3$}
	\psfrag{b4}[Bc][Bc]{$(a_1+c_1)$}
	\psfrag{b5}[Bc][Bc]{$(a_2+c_2)$}
	\psfrag{b6}[Bc][Bc]{$(a_3+c_3)$}
	\psfrag{c1}[Bc][Bc]{$c_{1}$}
	\psfrag{c2}[Bc][Bc]{$c_{2}$}
	\psfrag{c3}[Bc][Bc]{$c_{3}$}
	\psfrag{c4}[Bc][Bc]{$(a_1+b_1)$}
	\psfrag{c5}[Bc][Bc]{$(a_2+b_2)$}
	\psfrag{c6}[Bc][Bc]{$(a_3+b_3)$}
	\psfrag{T1}[Bc][Bc]{$T=1$}
	\psfrag{T2}[Bc][Bc]{$T=2$}
	\psfrag{aa1}[Bc][Bc]{$(b_1+c_1)$}
	\psfrag{aa2}[Bc][Bc]{$a_1+(b_2+c_2)$}
	\psfrag{aa3}[Bc][Bc]{$a_2+(b_3+c_3)$}
	\psfrag{aa4}[Bc][Bc]{$2a_1+(b_1+c_1)$}
	\psfrag{aa5}[Bc][Bc]{$2a_{2}+(b_2+c_2)$}
	\psfrag{aa6}[Bc][Bc]{$2a_3+(b_3+c_3)+(b_1+c_1)$}
	\psfrag{bb1}[Bc][Bc]{$(a_1+c_1)$}
	\psfrag{bb2}[Bc][Bc]{$b_1+(a_2+c_2)$}
	\psfrag{bb3}[Bc][Bc]{$b_2+(a_3+c_3)$}
	\psfrag{bb4}[Bc][Bc]{$2b_1+(a_1+c_1)$}
	\psfrag{bb5}[Bc][Bc]{$2b_{2}+(a_2+c_2)$}
	\psfrag{bb6}[Bc][Bc]{$2b_3+(a_3+c_3)+(a_1+c_1)$}
	\psfrag{cc1}[Bc][Bc]{$(a_1+b_1)$}
	\psfrag{cc2}[Bc][Bc]{$c_1+(a_2+b_2)$}
	\psfrag{cc3}[Bc][Bc]{$c_2+(a_3+b_3)$}
	\psfrag{cc4}[Bc][Bc]{$2c_1+(a_1+b_1)$}
	\psfrag{cc5}[Bc][Bc]{$2c_{2}+(a_2+b_2)$}
	\psfrag{cc6}[Bc][Bc]{$2c_3+(a_3+b_3)+(a_1+b_1)$}
\includegraphics[width=0.85\textwidth]{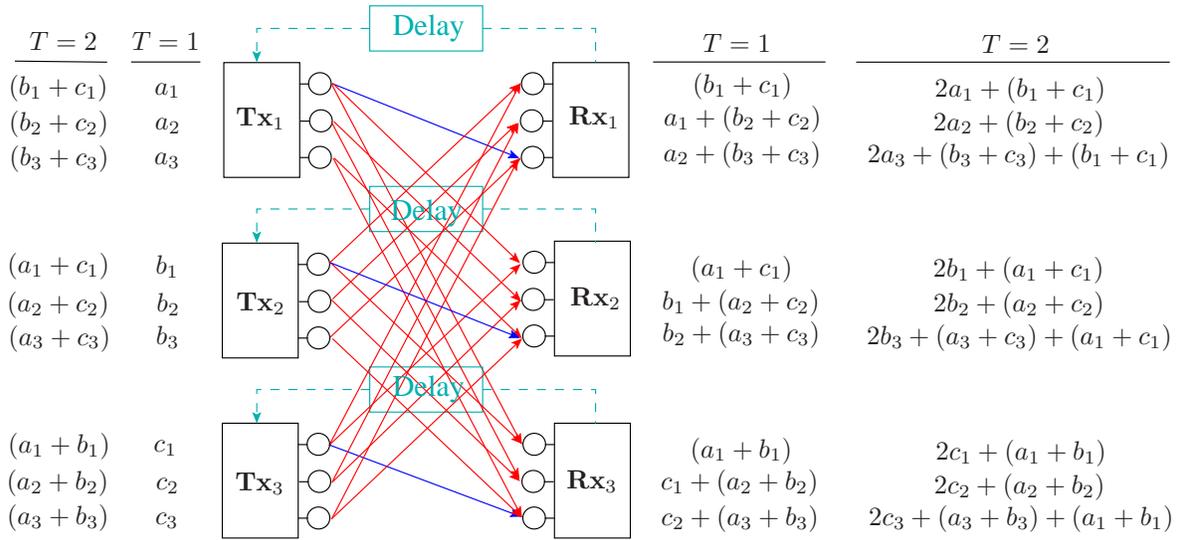}
\end{center}
\caption{Coding scheme for the linear deterministic model in the strong interference regime, for $K=3$, $n=1$, and $m=3$. }
\label{fig:det-strong}
\end{figure*}

\paragraph{Moderate Interference Regime $(m = n)$}
As discussed in the outer bound argument, the capacity curve is discontinuous at $m=n$. A trivial encoding scheme to achieve rate $\Rs=n/K$ is to perform time-sharing over $K$ blocks: in block $k$ only $\t{k}$ transmits its message at rate $R_k=n$ while all the transmitters keep silent. Note that this coding scheme does not get any benefit from the feedback link. 

\subsection{Outer Bound}
\label{sec:det-OB}
In this section we derive an outer bound on the symmetric feedback capacity of the fully-connected interference channel. We may use shorthand notation $W_{[2:K]}$ to denote $(W_2,W_3,\dots, W_K$. Similarly $Y_{[2:K]t}$ may be used to denote $(Y_{2t},Y_{3t},\dots,Y_{Kt})$.

Assume there exists an encoding scheme with block length $T$, which can reliably convey messages of each transmitter to its intended receiver.  We begin with the following chain of inequalities:
\begin{align}
H(W_1)&+ H(W_2) \stackrel{(a)}{=} H(W_1,W_2 | W_{[3:K]} ) \nonumber\\
&\leq  H(W_1,W_2, Y_1^T,Y_2^T | W_{[3:K]})\nonumber\\
&= H(Y_2^T | W_{[3:K]} ) + H(W_2  | W_{[3:K]}, Y_2^T) \nonumber\\
&\phantom{=}+ H(Y_1^T  | W_{[3:K]},Y_2^T) +  H(W_1  | W_{[3:K]} ,Y_1^T,Y_2^T)\nonumber\\
&\leq H(Y_2^T) + H(W_2| Y_2^T) \nonumber\\
&\phantom{=}+ H(Y_1^T  | W_{[3:K]} ,Y_2^T) +H(W_1|Y_1^T)\nonumber\\
&\leq T[\max(m,n)+2\epsilon_T] +H(Y_1^T  | W_{[3:K]} ,Y_2^T),\label{eq:fano-1}
\end{align}
where $(a)$ holds since messages are assumed to be independent, and \eqref{eq:fano-1} is due to Fano's inequality, in which $\epsilon_T\rightarrow 0 $, as $T$ grows. We can continue with bounding the remaining term in \eqref{eq:fano-1} as 
\begin{align}
H(Y_1^T  &| W_{[2:K]},Y_2^T) \nonumber\\
&\leq H(Y_1^T, Y_{[3:K]}^T,  | W_{[2:K]}Y_2^T)\nonumber\\
&=\sum_{t=1}^T H(Y_{1t}, Y_{[3:K]t} | W_{[2:K]}, Y_2^T, Y_1^{t-1}, Y_{[3:K]}^{t-1})\nonumber\\
&\stackrel{(b)}{=}\sum_{t=1}^T H(Y_{1t}, Y_{[3:K]t} | W_{[2:K]}, Y_2^T, Y_1^{t-1}, Y_{[3:K]}^{t-1}, X_{[2:K]t} )\nonumber\\
&\stackrel{(c)}{\leq} \sum_{t=1}^T H(Y_{1t}, Y_{[3:K]t}  |  Y_2^T,  X_{[2:K]t})\nonumber\\
&\stackrel{(d)}{=}\sum_{t=1}^T H(Y_{1t}, Y_{[3:K]t} | Y_2^T, X_{[2:K]t}, D^{q-m} X_{1t}  )\nonumber\\
&\stackrel{(e)}{=} \sum_{t=1}^T H(D^{q-n} X_{1t}   | Y_2^T, X_{[2:K]t}, D^{q-m} X_{1t} )\nonumber\\
&\stackrel{(c)}{\leq}  \sum_{t=1}^T H(D^{q-n} X_{1t}   | D^{q-m} X_{1t})\nonumber\\
&=T(n-m)^+,\label{eq:term4}
\end{align}
where $(b)$ is due to the fact that $X_{jt}=f_{jt}(W_j, Y_j^{t-1})$; $(c)$ holds because conditioning reduces entropy; $(d)$ follows the fact that  $D^{q-m} X_{1t}= Y_{2t}-D^{q-n} X_{2t} -D^{q-m} \sum_{j>2} X_{jt} $ is a deterministic function of $(Y_{2t}, X_{[2:K]t})$;  and  
$(e)$ holds because given $(D^{q-m} X_{1t}, X_{[2:K]t})$, the output $Y_{jt}=D^{q-n} X_{jt} +D^{q-m} \sum_{i\neq j} X_{it}  $ is  deterministically known  for $j=3,\dots, K$; moreover, every term in $Y_{1t}$ except $D^{q-n} X_{1t}$ is know given the same condition. 

Replacing \eqref{eq:term4} in \eqref{eq:fano-1} we  arrive at 
\begin{align}
R_1+R_2 &\leq \frac{1}{T} [H(W_1)+H(W_2)] \nonumber\\
&\leq \max(m,n) + (n-m)^+ +2\epsilon_T \nonumber\\
&=\max (m,2n-m) +2\epsilon_T.
\end{align}
Finally, since we are interested in symmetric rate characterization, we can set $R_1=R_2$, which yields 
\begin{align}
\Rs\leq \max\left(\frac{m}{2}, n-\frac{m}{2}\right) +\epsilon_T.
\label{eq:Rsym-m-neq-n}
\end{align}
Letting $T\rightarrow\infty$ and $\epsilon_T\rightarrow 0$, we obtain the upper bound as claimed in Theorem~\ref{thm:det}. 

The capacity behavior of the network has a discontinuity at $m=n$, where the symmetric achievable rate scales inverse linearly with $K$. The reason behind this phenomenon is very apparent by focusing on the deterministic model. This study reveals that when $m=n$ the received signals at all the receivers are \emph{exactly} the same. Therefore, each receiver should be able to decode all the messages, and hence its decoding capability is shared between all the signals, which results in $\Rs=n/K$. More formally, we can write
\begin{align}
T\sum_{k=1}^K R_k &= H(W_1,W_2,\dots, W_K) \nonumber\\
&\leq  I(W_{[1:K]}; Y_{[1:K]}^T) + KT\epsilon\nonumber\\
&\stackrel{(f)}{=}  I(W_{[1:K]}; Y_1^T) + KT\epsilon\nonumber\\
&\leq H(Y_1^T) + KT\epsilon 
\leq Tn + KT\epsilon,\label{eq:UB-n-equal-m}
\end{align} 
where $(f)$ is due to the fact that $Y_1^T=Y_2^T=\cdots=Y_K^T$. Dividing \eqref{eq:UB-n-equal-m} by $KT$ and setting $R_1=\dots=R_K=\Rs$, we arrive at $\Rs\leq n/K$.

\section{The Gaussian Network: A Coding Scheme}
\label{sec:G-Ach}
The encoding scheme we propose for this problem is similar to that of the $2$-user case. It is shown in \cite{SuhTse11} that for the $2$-user feedback interference channel, depending on the interference regime (value of $\alpha$), it is (approximately) optimum to decode the interfering message. Due to existence of the feedback, decoding the interference is not only useful for its removal and consequent decoding of the desired message (akin to the strong interference regime without feedback), but also helps for decoding  a part of the intended  message that is conveyed through the feedback path. In the $2$-user case,  at the end of the transmission block, each receiver not only decodes its own message completely, but also partially decodes the message of the other receiver.

A fundamental difference here is that in the $K$-user problem, there are multiple interfering messages that can be heard  at each receiver. Partial decoding of all interfering messages would dramatically decrease the maximum rate of the desired message. Our approach to deal with this is to consider the total interference received from all other users as a single message and decode it, without resorting to resolving the individual component of the interference. There are two key conditions to be fulfilled that allow us to perform such decoding, namely, {\sf{(i)}} interfering signals should be  \emph{aligned}, and {\sf{(ii)}} the summation of interfering signals should belong to a message set of proper size which can be decoded at each receiver. Here, the first condition is satisfied since the network is symmetric (all the interfering links have the same gain), and therefore all the interfering messages are received at the same power level. In order to satisfy the second condition, we can use a common \emph{lattice code} in all transmitters, instead of random Gaussian codebooks. The structure of a lattice codebook and its closedness with respect to summation, imply that the summation of aligned interfering codewords observed at each receiver is still a codeword from the same codebook. This allows us to perform decoding by searching over the single codebook, instead of the Cartesian product of all codebooks. Due to the fact that the aligned interference is decoded, we call this coding scheme \emph{cooperative interference alignment}. 

\paragraph*{Lattice Codes}

\emph{Lattice code} is a class of codes that can achieve the capacity of the Gaussian channel \cite{urbanke-lattice, erez2004achieving}, with lower complexity compared to the conventional random codes. The structural behaviors of lattice codes is very important property which can also be exploited for interference alignment. 

In the following we present a brief introduction for lattice codes which will be used later in our coding strategy. 

A $T$-dimensional lattice $\Lambda$ is subset of $T$-tuples with real elements, such that $\mathbf{x},\mathbf{y}\in \Lambda$ implies $-\mathbf{x}\in \Lambda $ and $\bx+\by\in\Lambda$. For an arbitrary $\bx \in \mathbb{R}^T$, we define $[\bx \mod \Lambda]= \bx - Q(\bx)$, where
\[
Q(\bx)=\arg \min_{\mathbf{t} \in \Lambda} \parallel \bx-\mathbf{t}\parallel 
\]
is the closet lattice point to $\bx$. The Voronoi cell of $\Lambda$ denoted by $\cV$ is defined as
\[
\cV=\{\bx\in\mathbb{R}^T: Q(\bx)=\mathbf{0}\}.
\]
The Voronoi volume $V(\cV)$ and the second moment $\sigma^2(\Lambda)$ of the lattice are defined as
\begin{align*}
V(\cV)=\int_{\cV} d\bx, \qquad \qquad
 \sigma^2(\Lambda)=\frac{ \int_{\cV} \parallel \bx\parallel^2 d\bx}{T V(\cV)}.
\end{align*}
We further define the normalized second moment of $\Lambda$ as
\[
G(\Lambda) = \frac{\sigma^2(\Lambda)}{V(\cV)^{2/T}}=\frac{1}{T} \frac{\int_{\cV} \parallel \bx\parallel^2 d\bx }{ V(\cV)^{1+\frac{2}{T} }}.
\]

A sequence of lattices $\{\Lambda_T\}$ is called \emph{good quantization code} if 
\[
\lim_{T\rightarrow\infty} G(\Lambda_T) =\frac{1}{2\pi e}.
\]
On the other hand a sequence of lattices is known to be \emph{good for AWGN channel coding} if 
\[
\lim_{T\rightarrow\infty} \Pr[\bz^T \in \cV_T]=1,
\]
where $\bz^T\sim\mathcal{N}\left(0, \sigma^2(\Lambda_T)\right)$ is random zero-mean Gaussian noise with proper variance. It is shown in \cite{ELZ-lattices} that there exist sequences of lattices $\{\Lambda_T\}$ which are simultaneously good for quantization and AWGN channel coding. 

In the rest of this section, we prove the direct part of Theorem~\ref{thm:Cap-Gaussian}. The analysis of two cases, namely weak and strong interference regimes, is separately presented. 

\subsection{Weak Interference Regime $2 \leq \INR \leq  \SNR/2$}
\label{ssec:GVW}
The coding scheme we use for this regime is based on the insight gained from studying the deterministic model. A careful review of the coding scheme illustrated in Appendix~\ref{app:det-ach-weak} reveals that the set of information symbols of each user can be split into three subsets: {\sf (1)} $(S_k(1),\dots, S_k(m))$ that are sent over the first channel use and cause interference for other receivers; {\sf (2)}   $(S_1(m+1),\dots, S_k(n))$ which are corrupted by interference at $\r{k}$, but do not cause interference at other receivers; and finally {\sf (3)} $(S_1(n+1),\dots, S_k(2n-m))$ which are sent on the second channel use on proper levels such that they do not cause interference at other receivers. The other $m$ levels of each transmitter in the second channel use send the interfering signal received at $\r{k}$ in the previous channel use. In the decoding process, each receiver first decodes the \emph{total} interference from its channel output in the second channel use, and removes it to decode $(S_1(n+1),\dots, S_k(2n-m))$. Then it also subtract the interference from its channel output in the first slot in order to decode $(S_1(1),\dots, S_k(m))$ and $(S_1(m+1),\dots, S_k(n))$.

Inspired by the this coding scheme and message splitting, we  consider three messages $w_{k0}$, $ w_{k1}$, and $w_{k2}$, for transmitter $\t{k}$ which will be conveyed to  receiver $\r{k}$ over two blocks. All similar sub-messages from different users have the same rates, which are denoted by $R_{k0}$, $R_{k1}$, and $R_{k2}$.  Encoding of $w_{k1}$ and $w_{k2}$ (which are counterparts of $(S_k(m+1),\dots, S_k(n))$ and $(S_k(n+1),\dots,S_{k}(2n-m))$, respectively) is performed  using usual random Gaussian codebooks with block length $T$ and average power $1$, which results in codewords $\bc_{k1}$ and $\bc_{k2}$. 
The power allocated to $\bc_{k1}$ and $\bc_{k2}$ is chosen such that they get received at other receivers at the noise level. 

The third sub-message, $w_{k0}$ (corresponding to $(S_k(1),\dots, S_k(m))$) is the main interfering part from $\t{k}$. Since we need the total interference to be decodable, we need to use a common lattice code which is shared between all transmitters. 

We need a nested lattice code \cite{ZSE-lattice} which is generated using a good quantization lattice for shaping and a good channel coding lattice. We start with $T$-dimensional nested lattices $\Lambda_c\subseteq \Lambda_f$, where $\Lambda_c$ is a good quantization lattice $\Lambda_c$ with $\sigma^2(\Lambda_c)=1$ and $G(\Lambda_c)\approx 1/2\pi e$, and $\Lambda_f$ as a good channel coding lattice. We  construct a codebook $\cC=\Lambda_f \cap \cV_c$, where  $\cV_c$ is the Voronoi cell of the  lattice $\Lambda_c$. The following properties are fairly standard in the context of lattice coding: 
\begin{itemize}
\item[a)] Codebook $\cC$ is a closed set with respect to summation under the ``$\mod \Lambda_c$'' operation, i.e., if $\bx_1,\bx_2\in\cC$ are two codewords, then $[\bx_1+\bx_2] \mod \Lambda_c \in\cC$ is also a codeword. 
\item[b)] Lattice code $\cC$  can be used to reliably transmit up to rate\footnote{A more sophisticated scheme can achieve rates $R=\frac{1}{2}\log \left(1+ \SNR\right)$. However, the simple scheme is sufficient for the purpose of approximate capacity characterization.} $R=\frac{1}{2}\log (\SNR)$  over a Gaussian channel modelled by $Y=\sqrt{\SNR}X+Z$ with $\E[Z^2]=1$.
\end{itemize}

In order to encode $w_{k0}$, we use the common lattice code $\cC$ defined above.  Let $\bs_{k0}$ be the lattice codeword to which $w_{k0}$ is mapped, and define $\bs_0=\bs_{10}+\bs_{20}+\dots+\bs_{K0}$. 

Once the encoding process is performed, the signal transmitted by $\t{k}$ in the first block (of length $T$) is formed as
\begin{align}
\bx_{k1}=\sqrt{\frac{\INR-1}{\INR}} \bc_{k0} + \sqrt{\frac{1}{\INR}} \bc_{k1}.\nonumber
\end{align}
where $\bc_{k0}=[\bs_{k0} -\bd_k] \mod \Lambda_c$, and $\bd_k$ is a random dither uniformly distributed over $\cV_c$, and shared between all the terminals in the network. Therefore, the signal received at $\r{k}$ can be written as
\begin{align}
\by_{k1}&=\sqrt{\SNR} \bx_{k1} + \sqrt{\INR} \sum_{i\neq k} \bx_{i1} +\bz_{k1}\nonumber\\
&=\sqrt{\frac{\SNR}{\INR} (\INR-1)} \bc_{k0} + \sqrt{\frac{\SNR}{\INR}} \bc_{k1} 
+ \sqrt{\INR-1} \sum_{i\neq k} \bc_{i0} \nonumber\\
&\phantom{=}+ \sum_{i\neq k} \bc_{i1} +\bz_{k1}.\nonumber
\end{align}
This received signal is sent to the transmitter $\t{k}$ over the feedback link. Having $\bx_{k1}$ and $\by_{k1}$, the transmitter  can compute 
\begin{align}
\tilde{\by}_{k}&=\by_{k1}-(\sqrt{\SNR}-\sqrt{\INR})\bx_{k1}= \sqrt{\INR} \sum_{i=1}^K \bx_{i1}+\bz_{k1}\nonumber\\
&=  \sqrt{\INR-1} \sum_{i=1}^K \bc_{i0} + \sum_{i=1}^K \bc_{i1} +\bz_{k1}.\nonumber
\end{align}
Recall that $\bs_0=[\sum \bs_{i0} \mod \Lambda_c] = [\sum \bc_{i0} + \sum \bd_i \mod \Lambda_c]  \in\cC$. So it can be decoded  from $\tilde{\by}_k$  by treating the rest as noise, provided that  
\begin{align}
R_0 \leq \frac{1}{2} \log \left(  \frac{\INR-1}{K+1}\right). \label{eq:W-R0-1}
\end{align}
Note that at this point $\r{k}$ cannot decode $\bc_0$. 

In the second block, having $\bs_0$ decoded,  $\t{k}$  generates $\bc_0=[\bs_0 - \bd_0] \mod \Lambda_c$ and transmits
\begin{align*}
\bx_{k2}=\sqrt{\frac{\INR-1}{\INR}} \bc_{0} + \sqrt{\frac{1}{\INR}} \bc_{k2}.
\end{align*} 
The signal received at $\r{k}$ in the second block can be written as
\begin{align}
\by_{k2}&=\sqrt{\SNR} \bx_{k2} + \sqrt{\INR} \sum_{i\neq k} \bx_{i2} +\bz_{k2}\\
&=\sqrt{\frac{\SNR}{\INR} (\INR-1)} \bc_{0} + \sqrt{\frac{\SNR}{\INR}} \bc_{k2} + \sqrt{\INR-1} \sum_{i\neq k} \bc_{0} \nonumber\\
&\phantom{=}+ \sum_{i\neq k} \bs_{i2} +\bz_{k2}\nonumber\\
&=\left( \sqrt{\frac{\SNR}{\INR}} +K-1\right)\sqrt{\INR-1} \bc_0 + \sqrt{\frac{\SNR}{\INR}} \bc_{k2} \nonumber\\
&\phantom{=}+ \sum_{i\neq k} \bc_{i2} +\bz_{k2}.
\end{align}
Receiver $\r{k}$ first decodes $\bc_{0}$ treating everything else as noise. This is possible as long as 
\begin{align}
R_0 \leq \hspace{-1pt}\frac{1}{2}\log\hspace{-1pt}\left(\hspace{-1pt}\frac{(\INR\hspace{-1pt}-\hspace{-1pt}1)\hspace{-1pt}\left(\hspace{-1pt}\sqrt{\SNR} +(K\hspace{-1pt}-\hspace{-1pt}1)\sqrt{\INR}\right)\hspace{-1pt}^2}{\SNR+K\INR}\hspace{-1pt}\right)\hspace{-1pt}.\label{eq:W-R0-2}
\end{align}
After decoding and removing $\bc_0$ from the received signal, $\r{k}$ can decode the Gaussian codeword $\bc_{k2}$, provided that
\begin{align}
R_2 \leq \frac{1}{2} \log \left( 1+ \frac{\SNR}{K \INR}\right). \label{eq:W-R2-1}
\end{align}
Next, the decoder uses $\bc_0$ to remove the interference $\sum_{i\neq k} \bc_{0i}$ from $\by_{k1}$ in order to  decode $\bc_{k0}$ and $\bc_{k1}$. To this end, $\r{k}$ first computes 
\begin{align}
\by_{k1}-\sqrt{\INR-1}\bc_0=&
\left(\sqrt{\frac{\SNR}{\INR} }-1\right) \sqrt{\INR-1} \bc_{k0}  \nonumber\\
&+   \sqrt{\frac{\SNR}{\INR}} \bc_{k1}+ \sum_{i\neq k} \bc_{i1} +\bz_{k1},\nonumber
\end{align}
from which  codewords $\bc_{k0}$ and $\bc_{k1}$ can be sequentially decoded provided that
\begin{align}
R_0 &\leq \frac{1}{2}\log\left(\frac{(\INR-1)\left(\sqrt{\SNR} -\sqrt{\INR}\right)^2}{\SNR+K \INR}\right),\label{eq:W-R0-3}\\
R_1 &\leq \frac{1}{2} \log \left( 1+ \frac{\SNR}{K \INR}\right). \label{eq:W-R1-1}
\end{align}
It only remains to choose $R_0$, $R_1$, and $R_2$ that satisfy all constraints in \eqref{eq:W-R0-1}--\eqref{eq:W-R1-1}. It is easy to verify that the choice of
\begin{align}
\begin{split}
R_0^\star&=\frac{1}{2}\log\left(\frac{\INR-1}{8(K+1)} \right)\\ 
R_1^\star&=R_2^\star=\frac{1}{2} \log \left( 1+ \frac{\SNR}{K\INR}  \right) \label{eq:rate-for-GW}
\end{split}
\end{align}
satisfies all the constraints, and  therefore 
\begin{align*}
\Rs&=\frac{1}{2}(R_0^\star+R_1^\star+R_2^\star)\\
&=\frac{1}{4}\log\left(\frac{\INR-1}{8(K+1)}  \right)+\frac{1}{2} \log \left( 1+ \frac{\SNR}{K \INR}\right) 
\end{align*}
can be simultaneously achieved for all the $K$ pairs of transmitters/receivers.

In the following we rephrase this achievable rate in a manner so that it can be easily compared to $\Cs$ in Theorem~\ref{thm:Cap-Gaussian}.   It is easy to verify that for $2\leq \INR \leq \frac{1}{2}\SNR$ we have
\begin{align}
\frac{\INR-1}{8(K+1)}\left(1+\frac{\SNR}{K\INR}\right) \geq \frac{1}{16K(K+1)} \left(1+\INR +\SNR \right),
\label{eq:GW-simplify-1}
\end{align}
which implies 
\begin{align}
\frac{1}{4}\log &\left(\frac{\INR-1}{8(K+1)}\right) + \cg{\frac{\SNR}{K\INR}} \nonumber\\
&\geq \cgc{1}{4}{\INR+\SNR} +\cgc{1}{4}{\frac{\SNR}{1+\INR}}\nonumber\\
&\phantom{=} -\frac{1}{4} \log 16 K^2(K+1)
\label{eq:Rw-common-form2}
\end{align}
Therefore, for this regime the symmetric rate of
\begin{align}
\Rs=& \cgc{1}{4}{\INR+\SNR} +\cgc{1}{4}{\frac{\SNR}{\INR}} \nonumber\\
&-\frac{1}{4} \log 16 K^2(K+1)
\end{align}
is achievable. 

\begin{remark}
It is worth mentioning that the coding schemes proposed for the weak interference regimes keep all messages except $W_k$ \emph{almost} secure from receiver $\r{k}$, for all $k=1,\dots,K$. More precisely, one can show that for $K\geq 3$, the leakage rate of information is upper bounded by 
\begin{align}
\frac{1}{2T} I(W_k ; y_{j}^{2T} ) \leq \frac{1}{2}\log \frac{K}{K-1}, \qquad k\neq j,
\end{align} 
where $2T$ is the length of the entire course of communication. Here the upper bound on the leakage rate is a constant, independent of  $\SNR$, $\INR$, and the actual rates of the messages. However, this secrecy is different from (and weaker than) the standard notion of secrecy, which imposes a vanishing total leakage rate in strong secrecy, or a vanishing per-symbol leakage rate in weak secrecy\footnote{We refer the reader to \cite{liang2009information} (and references therein) for details concerning information-theoretic secrecy. It is worth mentioning that both weak and strong secrecy are shown to be equivalent in  \cite{maurer2000information}, in the sense that  substituting the weak secrecy criterion by the stronger version does not change the secrecy capacity.}. 

The main intuition behind this is the following: each receiver can only decode its own message, as well as the sum-lattice codeword corresponding to the messages of other users. For instance, after decoding $W_1$, $\r{1}$ remains with a codeword that depends on $W_2, W_3,\dots,W_K$. Hence, $W_3, \dots,W_K$ act as a mask (encryption key) to hide $W_2$ from $\r{1}$. Therefore, although $\r{1}$ receives a certain amount of information about a function of all other messages, the amount of information it gets about each unintended individual message is negligible. This phenomenon is very similar to the encoding scheme used in  \cite{MDPS:11} to guarantee information-secrecy. However,  here this secrecy is naturally provided by the coding scheme, without any additional penalty in terms of the symmetric achievable rate of the network. We will discuss this property of the encoding scheme in more detail in Appendix~\ref{app:secrecy}.
\label{rmk:secrecy} 
\end{remark}

\subsection{Strong Interference Regime $\INR \geq 2\max(\SNR,1)$}
\label{ssec:GS}
The coding scheme for the strong interference regime is simpler than the last case. It is known that for strong interference regime in the usual interference channel (without feedback) it is  optimum to decode the interference and remove it from the received signal before decoding the intended message \cite{HK:81}. Surprisingly, this is not the case when transmitters get feedback from their respective receivers (as far as  approximate capacity is concerned). In this regime, the receivers do not need to decode the interference, and can cancel it using a zero-forcing scheme. This is implemented using Alamouti's scheme \cite{alamouti1998simple} in \cite{SuhTse11} for $K=2$. The orthogonality of the design matrix in the $2\times 2$ Alamouti's scheme causes the intended signal and the interference signal to be orthogonal, and so zero-forcing the interference does not cause a loss in signal power. However, it is shown in \cite{tarokh1999space} that $K\times K$ orthogonal designs  exist only for $K=2,4,8$ with real elements, and $K=2$ with complex elements.  
When such  matrices do not exist for arbitrary $K$, we may use non-orthogonal coding for the intended and interference signal. The key point is that if the transmitters can re-generate the interfering signal of one coding block at the receivers over another block, then the receiver can cancel two copies of interference, and decode its intended message. Since the transition matrix between the intended and interference signal on one side and the channel outputs on the other side is non-orthogonal, zero-forcing causes a power loss. However, this only affects the gap between the achievable rate and the upper bound, and does not cause a major problem when approximate capacity is concerned. We present this scheme in general detail in the rest of this section.

As in the previous case, the transmission is performed over two blocks. First note that since the interfering signals do not need to be decoded neither at the transmitters nor receivers, there is no need to use lattice codes to force the total interference to be a codeword. We associate a randomly generated Gaussian codebook of rate $\Rs$ to each transmitter. Transmitter $\t{k}$ maps its message $w_k$ to a codeword $\bc_k$ from its codebook, and sends $\bx_{k1}=\bc_k$ over the first block of transmission. At the end of the first block, each receiver sends back its received signal to its respective transmitter. Upon receiving $\by_{k1}$ from the feedback link, $\t{k}$ removes its own signal, and resends the residual over the second block. 
\begin{align*}
\bx_{k2}&=\gamma \left[\by_{k1}+ (\sqrt{\INR} -\sqrt{\SNR} ) \bx_{k1}\right] \nonumber\\
&=  \gamma (\sqrt{\INR} \sum_{i=1}^K \bc_{i} +\bz_{k1}),
\end{align*}
where $\gamma=1/\sqrt{K\INR +1}$ guarantees the transmit signal satisfies the power constraint.

At the end of the second transmission block, $\r{k}$ has access to
\begin{align}
\by_{k2}&=\sqrt{\SNR} \bx_{k2} +\sqrt{\INR} \sum_{i\neq k} \bx_{i2} +\bz_{k2}\nonumber\\
&=\gamma \left( \sqrt{\SNR} +(K-1) \sqrt{\INR} \right) \sqrt{\INR} \left(\bc_{k} + \sum_{i\neq k} \bc_{i} \right) \nonumber\\
&\phantom{=}+\gamma\sqrt{\SNR} \bz_{k1} + \gamma \sqrt{\INR} \sum_{i\neq k} \bz_{i1} + \bz_{k2}.\nonumber
\end{align}
Applying zero-forcing at $\r{k}$ to remove $\sum_{i\neq k} \bc_i$, we obtain an effective channel 
\begin{align*}
\tilde{\by}_k&= \by_{k2}- \gamma \left( \sqrt{\SNR} +(K-1) \sqrt{\INR} \right)  \by_{k1}
\nonumber\\
&=\gamma  \left( \sqrt{\SNR} +(K-1) \sqrt{\INR} \right) ( \sqrt{\INR} -  \sqrt{\SNR}) \bc_k\nonumber\\
&\phantom{=}- \gamma (K-1) \sqrt{\INR} \bz_{k1} + \gamma \sqrt{\INR} \sum_{i\neq k} \bz_{i1} + \bz_{k2}\nonumber\\
&= \gamma \left( \sqrt{\SNR} +(K-1) \sqrt{\INR} \right) ( \sqrt{\INR} -  \sqrt{\SNR}) \bc_k + \tilde{\bz}_{k2}
\end{align*}
which can be used for decoding $\bc_k$. The power of the total noise in this effective channel would be 
\begin{align*}
\E[\tilde{\bz}_{k2}^2]&= \gamma^2(K-1)^2 \INR\  \E[\bz_{k1} ^2] + \gamma^2 \INR \sum_{i\neq k} \E[\bz_{i1} ^2] + \E[\bz_{k2} 2] \nonumber\\
&= \frac{\INR}{K\INR+1} \left[ (K-1)^2 + (K-1) \right]+1 \nonumber\\
&= \frac{K^2 \INR +1}{K\INR+1} <K.
\end{align*}
On the other hand the power of the signal in the effective channel can be lower bounded by
\begin{align*}
\gamma^2 \Big( \sqrt{\SNR} +(&K-1) \sqrt{\INR} \Big)^2 ( \sqrt{\INR} -  \sqrt{\SNR}) ^2 \nonumber\\
&\geq
\gamma^2 \left( \sqrt{\SNR} + \sqrt{\INR} \right)^2 ( \sqrt{\INR} -  \sqrt{\SNR}) ^2 \nonumber\\
&=\frac{(\INR-\SNR)^2}{K\INR+1}.
\end{align*}
Therefore, since the course of communication is performed over two blocks, the symmetric rate
\begin{align*}
\Rs=\frac{1}{4} \log\left(1+ \frac{(\INR-\SNR)^2}{K(K\INR+1)}\right)
\end{align*}
can be simultaneously achieved for all pair of transmitter/receiver. We can simplify this expression to make it comparable to the rate claimed in Theorem~\ref{thm:Cap-Gaussian}. First note that 
\begin{align}
1+ \frac{(\INR-\SNR)^2}{K(K\INR+1) }\geq \frac{1}{8K^2} (1+\SNR+\INR)
\label{eq:GS-simplify-1}
\end{align}
for $\INR \geq 2\SNR$ and $K\geq 2$. On the other hand, in this regime we have 
$1+\frac{\SNR}{1+\INR} <2$, which implies
\begin{align*}
\Rs &\geq \frac{1}{4} \log\left(1+\SNR+\INR \right) + \frac{1}{4} \log\left(1+\frac{\SNR}{1+\INR}\right) \\
&\phantom{=}-\frac{1}{4}\log 16K^2.
\end{align*}

\subsection{Negligible Interference Regime $\INR<2$}
\label{ssec:GN}
In the discussion of Sections~\ref{ssec:GVW} and \ref{ssec:GS} we excluded the cases where $\INR$ is small. If this is the case, the standard \emph{treat interference as noise} scheme is close to be optimum. Here we briefly discuss the achievable rate and its gap from the upper bound for completeness. 

In this regime, each transmitter encodes its message using a Gaussian codebook, and sends it to the receiver. The course of communication is performed in a single block, and each receiver decodes its message at the end of the block by treating the interference as noise. This can support any positive rate not exceeding
\begin{align*}
\Rs=\frac{1}{2}\log\left(1+\frac{\SNR}{1+(K-1)\INR}\right).
\end{align*}
This expression can be rephrased as
\begin{align*}
\Rs&>\frac{1}{2}\log\frac{1}{K-1}\left(1+\frac{\SNR}{1+\INR}\right)\nonumber\\
&= \frac{1}{4}\log\left(1+\SNR+\INR\right)  -\frac{1}{4}\log(1+\INR)\nonumber\\
&\phantom{=} + \frac{1}{4} \log\left( 1+ \frac{\SNR}{\INR+1}\right) -\frac{1}{2} \log (K-1)\nonumber\\
&\geq \Cs- \frac{1}{4}\log 3(K-1)^2,
\end{align*}
which implies $\Cs-\Rs \leq \frac{1}{4}\log 3(K-1)^2$.

\section{The Gaussian Network: An Upper Bound}
\label{sec:G-UB}
In this section we prove the converse part of Theorem~\ref{thm:Cap-Gaussian}. To this end, we derive an upper bounds on the symmetric rate of the network. The essence of this bound is the same as the converse proof  for the deterministic network. That is, in the strong interference regime, given all the messages except for two of them, the output signal of any of the respective receivers is not only sufficient to decode its own message, but can also be used to decode the other missing message. Similarly, in the weak interference regime, although one receiver cannot completely decode the message of the other transmitter, it receives enough information to partially decode that message. 

We first define $\tz_{it}=z_{it}-z_{2t}$ for $i=3,4,\dots,K$ and $t=1,\dots,T$. Then, we can write 
\begin{align}
T(&R_1+R_2) \leq H(W_1)+H(W_2) \nonumber\\
&\stackrel{(a)}{=} H \Big( W_1,W_2 \Big| W_{[3:K]} \Big) \nonumber\\
&=H\Big(W_2\Big| W_3,\dots,W_K\Big) +H\Big(W_1 \Big| W_{[2:K]}\Big)\nonumber\\
&= I\Big(W_2; y_2^T \Big| W_{[3:K]}\Big) +H\Big( W_2 \Big| y_2^T ,W_{[3:K]}\Big) \nonumber\\
&\phantom{=} +I\Big(W_1; y_1^T y_2^T \Big| W_{[2:K]}\Big) + H\Big( W_1\Big|  y_1^T y_2^T,W_{[2:K]}\Big)\nonumber\\
&\leq I\Big(W_2; y_2^T, \tz_{[3:K]}^T \Big| W_{[3:K]}\Big) \nonumber\\
&\phantom{=}+I\Big(W_1; y_1^T y_2^T, \tz_{[3:K]}^T \Big| W_{[2:K]}\Big)  +2T\epsilon_T\nonumber\\
&= h\Big(y_2^T, \tz_{[3:K]}^T \Big| W_{[3:K]}\Big) - h\Big(y_2^T, \tz_{[3:K]}^T \Big| W_{[2:K]}\Big) \nonumber\\
&\phantom{=}+h\Big(y_1^T y_2^T, \tz_{[3:K]}^T \Big|W_{[2:K]}\Big) - h\Big(y_1^T y_2^T, \tz_{[3:K]}^T \Big| W_{[1:K]}\Big) \nonumber\\
&\phantom{=}+2T\epsilon_T\nonumber\\
&= h\Big(y_2^T, \tz_{[3:K]}^T  \Big| W_{[3:K]}\Big) + h\Big(y_1^T \Big| y_2^T, \tz_{[3:K]}^T , W_{[2:K]}\Big) \nonumber\\
&\phantom{=}- h\Big(y_1^T y_2^T, \tz_{[3:K]}^T \Big| W_{[1:K]}\Big) +2T\epsilon_T,\label{eq:UB-GS-1}
\end{align}
where $\epsilon_T$ vanishes as $T$ grows. Note that we used independence of the messages in $(a)$.
We can bound each term in \eqref{eq:UB-GS-1} individually. The first term can be bounded as 
\begin{align}
h&\Big(y_2^T, \tz_{[3:K]}^T \Big| W_3,\dots,W_K\hspace{-1pt}\Big) \hspace{-2pt}\leq \hspace{-1pt} h\Big(y_2^T\Big) \hspace{-1pt}+\hspace{-1pt} h\Big(\tz_3^T\Big)\hspace{-1pt}+\hspace{-1pt}\dots\hspace{-1pt}+\hspace{-1pt}h\Big(\tz_K^T\Big) \nonumber\\
&\stackrel{(b)}{\leq} T h(y_2) +\frac{T(K-2)}{2} \log (4\pi e)\nonumber\\
&\leq \frac{T}{2} \log \Bigg(1+ \SNR +(K-1) \INR + 2\sqrt{\SNR\cdot \INR}\sum_{j\neq 2}\rho_{2j} \nonumber\\
&\phantom{\leq \frac{T}{2} \log \Bigg(} + 2\INR \sum_{\begin{subarray}\ \ i>j \\ i,j\neq 2\end{subarray}} \rho_{ij} \Bigg) +\frac{T(K-1)}{2} \log (4\pi e)\nonumber\\ 
&\leq \frac{T}{2} \log \hspace{0pt}\left(\hspace{0pt}1\hspace{0pt}+\hspace{0pt}      \left( \sqrt{\SNR}+(K-1)\sqrt{\INR}\right)^2    \right)  \nonumber\\
&+\frac{T(K-1)}{2} \log (4\pi e), \label{eq:term1}
\end{align}
where $\rho_{ij} \in [-1,1]$ is the correlation coefficient between channel inputs $x_i$ and $x_j$. In $(b)$ we used the fact that $\E[\tz_i^2]=2$.

Bounding the second term is more involved. First note that 
\begin{align}
I\Big(y_1^T&; y_{[3:K]}^T \Big| y_2^T, \tz_{[3:K]}^T , W_{[2:K]} \Big)\nonumber\\
&= \sum_{t=1}^T I\Big(y_1^T; y_{[3:K]t} \Big| y_2^T, \tz_{[3:K]}^T , W_{[2:K]}, y_{[3:K]}^{t-1} \Big)\nonumber\\
&\stackrel{(c)}{=} \sum_{t=1}^T I\Big(y_1^T; y_{[3:K]t} \Big| y_2^T, \tz_{[3:K]}^T , W_{[2:K]}, y_{[3:K]}^{t-1}, x_{[2:K]t}\Big)\nonumber \\
&\stackrel{(d)}{=} 0\label{eq:y_i-reconstruct}
\end{align}
where $(c)$  holds since for $j=2,\dots,K$, $x_{jt}=f_{jt}(W_j,y_j^{t-1})$ is a deterministic function of the message and channel output. The  equality in $(d)$ is due to the fact that for $j=3,\dots,K$, we have
\begin{align}
y_{jt}&= \sqrt{\SNR} x_{jt} +\sqrt{\INR} \sum_{i\notin\{2,j\}} x_{it} + \sqrt{\INR} x_{2t}+ z_{jt} \nonumber\\
&= \left[ \sqrt{\SNR} x_{2t} +\sqrt{\INR} \sum_{i\notin \{2,j\}} x_{it} +\sqrt{\INR} x_{jt}+ z_{2t}\right] \nonumber\\
&\phantom{=} +(\sqrt{\SNR}-\sqrt{\INR})(x_{1t}- x_{2t})  +(z_{jt}-z_{2t})\nonumber\\
&=y_{2t} +(\sqrt{\SNR} -\sqrt{\INR}) (x_{jt}-x_{2t}) +\tz_{jt},\label{eq:connect-yk-to-y2}
\end{align}
which implies that $y_{jt}$ can be deterministically recovered from $(y_{2t},x_{2t}, x_{jt}, \tz_{jt})$. Hence, each term in \eqref{eq:y_i-reconstruct} is zero. From \eqref{eq:y_i-reconstruct} we can bound the second term in \eqref{eq:UB-GS-1} as
\begin{align}
h\Big(y_1^T & \Big| y_2^T, \tz_{[3:K]}^T , W_{[2:K]}\Big)  = h\Big(y_1^T \Big| y_{[2:K]}^T , \tz_{[3:K]}^T , W_{[2:K]}\Big)\nonumber\\
&= h\Big(y_1^T \Big| y_{[2:K]}^T , \tz_{[3:K]}^T , W_{[2:K]}, x_{[2:K]}T\Big)\nonumber\\
&\leq h\Bigg(\sqrt{\SNR}x_1^T-\sqrt{\INR}\sum_{i\neq  1} x_{i}^T+ z_{1}^T\Bigg | \nonumber\\
&\phantom{\leq h\Bigg(\ \ \ \ \ \ \ \ \ \ } y_2^T-\sqrt{\SNR}x_2^T-\sqrt{\INR} \sum_{j>2} x_{j}^T,   x_{[2:K]}^T\Bigg)\nonumber\\
&\leq h\Big(\sqrt{\SNR}x_1^T + z_{1}^T \Big|  \sqrt{\INR} x_{1}^T +z_{2}^T\Big)\nonumber\\
&\leq \frac{T}{2} \log \left(1+\frac{\SNR}{1+\INR}\right)+\frac{T}{2} \log (2\pi e).\label{eq:term2}
\end{align}

Finally, we can bound the third term in \eqref{eq:UB-GS-1} as follows:
\begin{align}
h&\Big(y_1^T, y_2^T, \tz_{[3:K]}^T \Big|W_{[1:K]} \Big) \nonumber\\
&= \sum_{t=1}^T h\Big(y_{1t}, y_{2t}, \tz_{[3:K]t} \Big| y_1^{t-1}, y_2^{t-1}, \tz_{[3:K]}^{t-1},  W_{[1:K]}\Big)  \nonumber\\
&\geq \sum_{t=1}^T h\Big(y_{1t}, y_{2t}, \tz_{[3:K]t} \Big| y_1^{t-1}, y_2^{t-1}, \tz_{[3:K]}^{t-1},  W_{[1:K]}, x_{[1:K]t} \Big)  \nonumber\\
&= \sum_{t=1}^T h\Big(z_{1t}, z_{2t}, \tz_{[3:K]t} \Big| y_1^{t-1}, y_2^{t-1}, \tz_{[3:K]}^{t-1} ,  W_{[1:K]}, x_{[1:K]t} \Big)  \nonumber\\
&\stackrel{(e)}{=} \sum_{t=1}^T h\Big(z_{1t}, z_{2t}, \tz_{[3:K]t} \Big)\nonumber\\
&= \sum_{t=1}^T h\Big(z_{[1:K]t}\Big)
=\frac{TK}{2} \log(2\pi e), \label{eq:term3}
\end{align}
where $(e)$ is due to the facts that the channels are memoryless and the noise at time $t$ is independent of all the messages and  signals and noises in the past.  Substituting \eqref{eq:term1}, \eqref{eq:term2} and \eqref{eq:term3} in \eqref{eq:UB-GS-1}, and recalling the fact that we are interested in the maximum $R_1=R_2=\Rs$, we get
\begin{align}
\Rs\leq 
& \frac{1}{4} \log \left(1+ \left(\sqrt{ \SNR} +(K-1)\sqrt{\INR}\right)^2\right) \nonumber\\
&+  \frac{1}{4} \log \left(1+\frac{\SNR}{1+\INR}\right)+\frac{K-1}{4}. \nonumber 
\end{align}
This bound can be further simplified as follows.  It is easy to show that
\begin{align*}
\left( \sqrt{\SNR}+(K-1)\sqrt{\INR}\right)^2 
\leq K^2 (\SNR+\INR)
\end{align*}
which implies
\begin{align}
\Rs &\leq\frac{1}{4} \log \left(1+ K^2 ( \SNR + \INR)\right)+ \frac{1}{4} \log \left(1+\frac{\SNR}{1+\INR}\right)\nonumber\\
&\phantom{=}+\frac{K-1}{4}\nonumber\\
&\leq \frac{1}{4} \log (1+ \SNR + \INR  )+ \frac{1}{4} \log \left(1+\frac{\SNR}{1+\INR}\right) \nonumber\\
&\phantom{=}+\frac{K-1}{4} + \frac{1}{2} \log K,\label{eq:UB-common-form}
\end{align}
which is the desired bound.

\section{The Generalized Degrees of Freedom}
\label{sec:dof}
In this section we prove Theorem~\ref{thm:DoF}. The proof for $\alpha\neq 1$ is straight-forward from Theorem~\ref{thm:Cap-Gaussian} as follows. Recall the achievable symmetric rate in Theorem~\ref{thm:Cap-Gaussian}. Hence, 
\begin{align}
d_{\FB}&(\alpha) = \limsup_{\SNR\rightarrow\infty} \frac{\Rs(\SNR,\alpha)}{\frac{1}{2}\log(\SNR)}\nonumber\\
&\hspace{-5pt}= \limsup_{\SNR\rightarrow\infty} \frac{\frac{1}{4} \log(1+\SNR+\SNR^\alpha)+\frac{1}{4} \log(1+\SNR^{1-\alpha})}{\frac{1}{2}\log(\SNR)}\nonumber\\
&\hspace{-5pt}= \frac{1}{2}\max\{1,\alpha\}+ \frac{(1-\alpha)^+}{2} \nonumber\\
&\hspace{-5pt}=\left\{
\begin{array}{ll}
1-\frac{\alpha}{2} & \alpha<1\nonumber\\
\frac{\alpha}{2} & \alpha>1.
\end{array}
\right.
\end{align}

The concept of generalized degrees of freedom for $\alpha=1$ is more involved, and  a finer look to the problem is necessary. For $\INR=\SNR$ we claim that the degrees of freedom of the network is $1/K$. Note that a simple time-sharing scheme, in which in each block all the transmitters except one are silent,  guarantees a reliable rate of  $\Rs=\frac{1}{2K}\log (1+\SNR)$, which results in $d_{\FB}(\INR=\SNR)\geq 1/K$. 

On the other hand we may use a simple  cut-set argument  in order to show optimality of this $\DoF$ for $\INR=\SNR$. 
Recall that in the deterministic model, the received signal of all the receivers were identical for $m=n$. A similar intuition can explain this phenomenon: when the gain of the direct and cross links are the same, the output signals at all receivers are statistically equivalent,  and given any of them, the uncertainty in the others is small. 
We can formally write
\begin{align}
TK &\Rs = T \sum_{k=1}^K R_k = H\Big(W_{[1:K]}\Big) \nonumber\\
&\leq I \Big(y_{[1:K]}^T ; W_{[1:K]} \Big) + KT\epsilon_T\nonumber\\
&= h\Big(y_{[1:K]}^T\Big) -h\Big( y_{[1:K]}^T \Big| W_{[1:K]} \Big) + KT\epsilon_T\nonumber\\
&= h\Big(y_1^T, z_2^T-z_1^T,\dots, z_K^T-z_1^T\Big) \nonumber\\
&\phantom{=} -\sum_{t=1}^T h\Big( y_{[1:K]t} \Big| y_{[1:K]}^{t-1} , W_{[1:K]}\Big) + KT\epsilon_T\nonumber\\
&\stackrel{(a)}{\leq} h(y_1^T)+ \sum_{k=2}^K h\Big(z_k^T-z_1^T\Big) \nonumber\\
&\phantom{=} - \sum_{t=1}^T h\Big( y_{[1:K]t} \Big| y_{[1:K]}^{t-1}, W_{[1:K]}, x_{[1:K]t} \Big) + KT\epsilon_T\nonumber\\
&= h(y_1^T) +\sum_{k=2}^K h\Big(z_k^T-z_1^T\Big) -\sum_{t=1}^T \sum_{k=1}^K h(z_{kt}) + KT\epsilon_T\nonumber\\
&\leq \frac{T}{2} \log \left( 1+ (\sqrt{\SNR} +(K-1) \sqrt{\INR})^2 \right) \nonumber\\
&\phantom{=} +\frac{(K-1)T}{2} \log 2 + KT\epsilon_T\nonumber\\
&\leq\frac{T}{2} \log \left(1+ K^2 \SNR  \right)  +\frac{(K-1)T}{2}   + KT\epsilon_T,\nonumber
\end{align}
where $(a)$ holds since $x_{kt}=f_{kt} (W_k, y_k^{t-1})$. Dividing by $KT$, we get 
\begin{align}
\Rs\leq \frac{1}{K} \log(1+K^2\SNR)+ \frac{K-1}{2K},
\label{eq:Rs-a1}
\end{align}
which implies $d_{\FB}(\INR=\SNR)\leq \frac{1}{K}$. 

However,  a more accurate relationship between $\INR$ and $\SNR$ has to be taken into account when $\SNR$ and $\INR$ are close to each other. The reason is that
\[
\lim_{\SNR\rightarrow \infty}\frac{\log \INR}{\log \SNR}=1
\]
may include several regimes with different capacity behaviors. For example, if $\INR=\beta \SNR$ with constant $\beta\notin \left(\frac{1}{2},2\right)$, we still have $\alpha=1$. Nevertheless, the result of Theorem~\ref{thm:Cap-Gaussian} holds for this regime of parameters, and thus $d_{\FB}=\frac{1}{2}$ can be achieved. An even more complicated scenario may happen when\footnote{Note that this regime is not included in the statement of Theorem~\ref{thm:Cap-Gaussian}. In fact, the gap between the rate can be achieved by the proposed coding scheme and the upper bound is not bounded for this regime. The characterization of (approximate) capacity remains as an open question.} $\INR=\SNR(1+o(\SNR))$ with $\lim_{\SNR\rightarrow \infty} o(\SNR)=0$. 

In other words, one has to be more careful when dealing with two simultaneous limiting behaviors, namely  $\log(\INR)\rightarrow \log(\SNR)$ and $\SNR\rightarrow \infty$, because depending on different rates of growth of $\SNR$ and convergence of $\INR$ to $\SNR$, different numbers of degrees of freedom can be achieved. This discontinuous behavior is similar to the discontinuity of the $\DoF$ of the fully connected interference channel (without feedback) studied in \cite{motahari2009degrees, etkin2009degrees}. It is shown in \cite{etkin2009degrees} that the per-user $\DoF$ of the $K$-user FC-IC is strictly less than $\frac{1}{2}$ when $\INR=\beta \SNR$ and $\beta$ is a non-zero rational coefficient. However, $\DoF=1/2$ can be achieved for irrational $\beta$. 

A slightly different (and perhaps more realistic) model to study $\DoF$ ($\GDoF$ at $\alpha=1$) is to fix the channel gains, and allow the transmit power of all transmitters to increase simultaneously, i.e., $P\rightarrow\infty$ where $\E[x_k^2]\leq P$. Under this model, instead of having two independently\footnote{Any rate of growth satisfying $\log \INR/\log \SNR \rightarrow 1$ is feasible in model in \eqref{eq:channel} and \eqref{eq:def:alpha}. } growing variables ($\SNR$ and $\INR$),  we deal with a single variable $P$, and the relationship between the signal-to-noise ratio and interference-to-noise ratio is controlled by the channel coefficients. A nice and generic result of Cadambe and  Jafar \cite{CJ:feedback-not-useful} shows that the per-user $\DoF$ of $K$-user FC-IC with  feedback and randomly chosen channel coefficients (not necessarily symmetric)  under the latter model is $1/2$, almost surely. 

\section{Gaussian Upper Bound for Global Feedback Model}
\label{sec:GF}
In Sections~\ref{sec:G-Ach} and \ref{sec:G-UB} we demonstrated the effect of \emph{local} feedback on the symmetric capacity of the $K$-user fully connected interference channel. It is shown that providing each transmitter with the signal observed by its receiver in the past can be significantly beneficial. In particular, it can improve the $\GDoF$ of the network for certain regimes of interference. A natural question arises is whether availability of more information through the feedback can further improve the symmetric capacity of the network. In the rest of this section we study the effect of \emph{global feedback} on the capacity of the network, that is when each transmitter has access to the received signals of not only its respective receiver, but all other receivers. We will show that for the symmetric topology which is of interest in this paper, global feedback does \emph{not} improve symmetric capacity beyond the local feedback (in approximate sense). This generalizes  the result of  \cite{sahai2011effective}  that in $2$-user interference channel with local feedback providing additional feedback link does not improve the capacity. 

In this model, the transmit signal of each user may depend on its message and all received signals in the past. Hence, in general we have
\begin{align}
x_{kt}=g_{kt} (W_k, y_1^{t-1},\dots,y_{K}^{t-1}).
\label{eq:gfb}
\end{align}
\begin{thm}
The symmetric  capacity of the $K$ user interference channel with global feedback with $\frac{\INR}{\SNR} \notin (0.5,2)$ can be approximated by 
\begin{align}
\Cs = \frac{1}{4} \log (1+ \SNR + \INR )+ \frac{1}{4} \log \left(1+\frac{\SNR}{1+\INR}\right). 
\end{align}
\end{thm}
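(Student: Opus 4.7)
\section{Proof Proposal for the Global Feedback Theorem}

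The achievability direction is essentially for free: any coding scheme for the local feedback model (which satisfies $x_{kt}=g_{kt}(W_k,y_k^{t-1})$) is a legal coding scheme for the global feedback model (which satisfies \eqref{eq:gfb}). Hence the lattice-based cooperative interference alignment scheme analyzed in Section~\ref{sec:G-Ach} already certifies $\Rs \geq \Cs-\frac{1}{4}\log 16 K^2(K+1)$ in the global-feedback setting. All the work is therefore on the converse.

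My plan for the converse is to revisit the argument of Section~\ref{sec:G-UB} and show that it continues to go through when $x_{kt}=g_{kt}(W_k,y_{[1:K]}^{t-1})$. The top-level chain \eqref{eq:UB-GS-1} uses only message independence, Fano's inequality, and the addition of the genie noise differences $\tz_{[3:K]}^T$, none of which depend on the feedback structure. So the task reduces to re-bounding the three residual entropies: (i) $h(y_2^T,\tz_{[3:K]}^T\mid W_{[3:K]})$, (ii) $h(y_1^T\mid y_2^T,\tz_{[3:K]}^T,W_{[2:K]})$, and (iii) $h(y_1^T,y_2^T,\tz_{[3:K]}^T\mid W_{[1:K]})$. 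Item (i) is a pure output-covariance bound and transfers verbatim; item (iii), treated step-by-step, conditions on $y_1^{t-1},y_2^{t-1}$ and all past noise differences in addition to all messages, and since \eqref{eq:connect-yk-to-y2} lets one reconstruct $y_{[3:K]}^{t-1}$ inductively, the global-feedback $g_{kt}$ still yields $x_{[1:K]t}$ deterministically from the conditioning, so the lower bound by $\sum_t h(z_{[1:K]t})$ is preserved.

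The only genuinely subtle point is item (ii), which in Section~\ref{sec:G-UB} invokes $x_{jt}=f_{jt}(W_j,y_j^{t-1})$ to claim that $y_{[3:K]}^T$ is a deterministic function of the conditioning variables. With global feedback, $x_{jt}$ may also depend on $y_1^{t-1}$, which is precisely the random variable whose entropy we are trying to bound. My fix is to decompose by the chain rule in time:
\begin{align*}
h\bigl(y_1^T \bigm| y_2^T,\tz_{[3:K]}^T, W_{[2:K]}\bigr)
= \sum_{t=1}^T h\bigl(y_{1t} \bigm| y_1^{t-1}, y_2^T, \tz_{[3:K]}^T, W_{[2:K]}\bigr).
\end{align*}
At step $t$, the conditioning already contains $y_1^{t-1}$ and $y_2^{t-1}$, so by induction on $\tau\leq t-1$ I can recover $y_{[3:K]}^{\tau}$ via \eqref{eq:connect-yk-to-y2} and, using $W_{[2:K]}$, the global-feedback encoding map then delivers $x_{[2:K]\tau+1}$. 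Propagating this recursion up to $\tau=t-1$ gives $x_{[2:K]t}$ as a function of the conditioning. I can therefore add $x_{[2:K]t}$ to the conditioning for free and mimic \eqref{eq:term2} by bounding $h(\sqrt{\SNR}\,x_{1t}+z_{1t}\mid \sqrt{\INR}\,x_{1t}+z_{2t})$, which yields the same per-step contribution $\tfrac{1}{2}\log(1+\SNR/(1+\INR))+\tfrac{1}{2}\log(2\pi e)$.

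The main obstacle I expect is precisely this bookkeeping in the induction, because the causality structure of $g_{kt}$ couples $x_{[2:K]t}$ to all past outputs, so one must verify carefully that the base case ($t=1$, where every $x_{k,1}$ depends only on $W_k$) and the inductive step both stay within the allowed conditioning set. Once the inductive reconstruction is in place, all three residual bounds match those of Section~\ref{sec:G-UB}, yielding $\Rs \leq \Cs+\frac{K-1}{4}+\frac{1}{2}\log K$ and hence the approximate equivalence of global and local feedback claimed in the theorem.
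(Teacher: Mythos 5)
Your proposal is correct and follows essentially the same route as the paper: achievability is inherited from the local-feedback scheme, and the converse is rescued by showing that the only broken step is the bound on $h\bigl(y_1^T \bigm| y_2^T,\tz_{[3:K]}^T,W_{[2:K]}\bigr)$. Your inductive reconstruction of $x_{[2:K]t}$ from $\{y_1^{t-1},y_2^T,\tz_{[3:K]}^T,W_{[2:K]}\}$ via \eqref{eq:connect-yk-to-y2} is exactly the content (and proof) of the paper's Lemma~\ref{lm:encoding-global-FB}, after which the per-step Gaussian bound matches \eqref{eq:term2}.
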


Note that the coding scheme presented in Section~\ref{sec:G-Ach} well suits this model, and can be applied to achieve similar rates. We only need to derive an upper bound on the symmetric capacity for the global feedback model. 

Recall the proof of the upper bound of Theorem~\ref{thm:Cap-Gaussian} in Section~\ref{sec:G-UB}.  It is clear that the initial bound  in  \eqref{eq:UB-GS-1} is still valid, regardless of the feedback model. Moreover, bounding inequalities \eqref{eq:term1} and \eqref{eq:term3},  used to bound the first and third terms in \eqref{eq:UB-GS-1} respectively, would remain the same under the global feedback model. However, the argument we used to bound the second bound in \eqref{eq:term2} is not valid any more. The reason is that step $(c)$ in \eqref{eq:y_i-reconstruct} does not hold for global feedback model, because in this model the input signal $x_{jt}$ depends not only on $(W_j, y_j^{t-1})$, but also on $(y_1^{t-1}, y_2^{t-1},\dots, y_K^{t-1})$, and $y_1^{t-1}$ is missing in the condition. Alternatively, we can use the following lemma. 

\begin{lm}
For any reliable coding scheme of block length $T$, the transmit signal of  users $k=2,3,\dots,K$ at time $t$ can be determined from \[Q_t=\{y_1^{t-1},y_2^T, \tz_3^T,\dots, \tz_K^T, W_2,W_3,\dots,W_K\},\] 
where $\tz_{kt}=z_{kt}-z_{2t}$ for $k=3,4,\dots,K$. More precisely, for any family of coding functions $\{g_{kt}\}$ defined in \eqref{eq:gfb}, there exist  corresponding coding functions $\{\g_{kt}\}$ such that
\[
x_{kt}=\g_{kt}(y_1^{t-1}, y_2^T, \tz_3^T,\dots, \tz_K^T, W_2,W_3,\dots,W_K)
\]
for $k=2,3,\dots,K$ and $t=1,\dots,T$.
\label{lm:encoding-global-FB}
\end{lm}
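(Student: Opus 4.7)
The plan is to prove the lemma by strong induction on $t$, constructing $\g_{kt}$ explicitly for every $k\in\{2,\dots,K\}$ as a deterministic composition of the original coding functions $g_{k\tau}$ with $\tau\le t$ together with a reconstruction rule that rebuilds the ``missing'' output histories $y_3^{t-1},\dots,y_K^{t-1}$ from the data inside $Q_t$. At each step, the induction hypothesis will assert that, for all $\tau\le t-1$ and all $k\in\{2,\dots,K\}$, the symbol $x_{k\tau}$ has already been expressed as a deterministic function of coordinates of $Q_\tau\subseteq Q_t$.

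The base case $t=1$ is immediate: by \eqref{eq:gfb}, $x_{k1}=g_{k1}(W_k)$ depends only on $W_k$, which is contained in $Q_1$ for $k\ge 2$. For the inductive step, in order to invoke $g_{kt}(W_k,y_1^{t-1},\dots,y_K^{t-1})$ on data visible inside $Q_t$, I need each of the histories $y_j^{t-1}$. Two are available directly: $y_1^{t-1}$ is a coordinate of $Q_t$, and $y_2^{t-1}$ is a prefix of $y_2^T\in Q_t$. For the remaining $j\in\{3,\dots,K\}$ and each $\tau\le t-1$, I would use identity \eqref{eq:connect-yk-to-y2},
\[
y_{j\tau}=y_{2\tau}+(\sqrt{\SNR}-\sqrt{\INR})(x_{j\tau}-x_{2\tau})+\tz_{j\tau},
\]
whose right-hand side depends solely on $y_{2\tau}$ (from $y_2^T$), $\tz_{j\tau}$ (from $\tz_j^T$), and the inputs $x_{j\tau},x_{2\tau}$. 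Since both indices $j$ and $2$ lie in $\{2,\dots,K\}$, the induction hypothesis supplies these last two as explicit functions of $Q_t$. Plugging the reconstructed $y_{[1:K]}^{t-1}$ into $g_{kt}$ defines $\g_{kt}$ for every $k\in\{2,\dots,K\}$ and closes the induction.

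The main subtlety, and the reason the argument is not circular, is why the reconstruction of $y_{j\tau}$ never requires $x_{1\tau}$ (which is \emph{not} available from $Q_t$, since $W_1$ is excluded). Identity \eqref{eq:connect-yk-to-y2} works precisely because $x_{1\tau}$ enters both $y_{j\tau}$ and $y_{2\tau}$ with the same coefficient $\sqrt{\INR}$, so it cancels when their difference is taken; the noise difference $\tz_{j\tau}=z_{j\tau}-z_{2\tau}$ that remains is exactly what $Q_t$ provides. Thus the cancellation that drove the outer-bound manipulation in Section~\ref{sec:G-UB} is precisely what allows this lemma to extend from local feedback to the global-feedback model \eqref{eq:gfb}: even though $x_{kt}$ is now allowed to depend on $y_1^{t-1},\dots,y_K^{t-1}$, all of this information is either directly contained in $Q_t$ or recoverable from it without ever needing any component of $W_1$ or $x_1^{t-1}$.
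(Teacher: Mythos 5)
Your proof is correct and follows essentially the same route as the paper: induction on $t$, with the past outputs $y_{j}^{t-1}$ for $j\ge 3$ reconstructed from $y_2^T$, $\tz_j^T$ and the already-determined inputs $x_{j}^{t-1},x_2^{t-1}$ via the identity \eqref{eq:connect-yk-to-y2}, so that $g_{kt}$ can be evaluated on data in $Q_t$ without ever needing $W_1$ or $x_1^{t-1}$. Your explicit remark that $x_{1\tau}$ cancels because it enters $y_{j\tau}$ and $y_{2\tau}$ with the same coefficient $\sqrt{\INR}$ is exactly the (implicit) reason the paper's step works, and in fact your write-up uses the correct noise term $\tz_{j\tau}$ where the paper's displayed equation has a small typo ($\tz_{2,\ell-1}$ in place of $\tz_{k,\ell-1}$).
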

\begin{proof}
 We prove this claim by induction on $t$. For $t=1$, the claim is obvious, since there is no feedback in the system and $x_{k1}=g_{k1}(W_k)$.  Assume the claim is correct for $t=\ell-1$. We will show that the claim valid for $t=\ell$, i.e., $x_{k\ell}=\g_{k\ell}(Q_{\ell})=\g_{k\ell}(y_{1,\ell-1}, Q_{\ell-1} ) $. 
Similar to \eqref{eq:connect-yk-to-y2}, we have
\begin{align*}
y_{k,\ell-1}\hspace{-1pt}=\hspace{-1pt}y_{2,\ell-1} \hspace{-1pt}+\hspace{-1pt}(\sqrt{\SNR} \hspace{-1pt}-\hspace{-1pt}\sqrt{\INR}) (x_{k,\ell-1}\hspace{-1pt}-\hspace{-1pt}x_{2,\ell-1}) \hspace{-1pt}+\hspace{-1pt}\tz_{2,\ell-1}.
\end{align*}
Note that $y_{2,\ell-1} $ and $\tz_{2,\ell-1}$ are given in $Q_{\ell-1}$. Moreover, $x_{k,\ell-1}=\g_{k,\ell-1}(Q_{\ell-1})$ and 
$x_{2,\ell-1}=\g_{2,\ell-1}(Q_{\ell-1})$. Hence, we can find $y_{k,\ell-1}$ from $Q_{\ell-1}$ for $k=2,\dots,K$. This means the output of all receivers at time $\ell-1$ (except $\r{1}$) is known from $Q_{\ell-1}$ (and hence from $Q_{\ell}$). On the other hand the channel output for $\r{1}$ at time $\ell-1$ is explicitly given in $Q_{\ell}$. Therefore, all the  channel outputs $y_{kt}$ for $k=1,\dots,K$ and $t=1,\dots,\ell-1$ are given by $Q_{\ell}$, which together with $W_k$ can uniquely determine the transmit signals $x_{2\ell},\dots,x_{K\ell}$.
\end{proof}

Now, from Lemma~\ref{lm:encoding-global-FB}, we can bound the second term in \eqref{eq:UB-GS-1} as follows.
\begin{align}
h\Big(\hspace{-1pt}y_1^T \Big| y_2^T, &\tz_{[3:K]}^T , W_{[2:K]} \hspace{-1pt}\Big)  \hspace{-1pt}=\hspace{-1pt} \sum_{t=1}^T \hspace{-1pt} h\Big(\hspace{-1pt}y_{1t} \Big| y_1^{t-1}, y_2^T, \tz_{[3:K]}^T , W_{[2:K]} \hspace{-1pt}\Big) \nonumber\\
&\stackrel{(a)}= \sum_{t=1}^T h\Big(y_{1t} \Big| y_1^{t-1}, y_2^T, \tz_{[3:K]}^T , W_{[2:K]} , x_{[2:K]}^T \Big)\nonumber \\
&\leq\frac{T}{2} \log \left(1+\frac{\SNR}{1+\INR}\right)+\frac{T}{2} \log (2\pi e)
\end{align}
where we used  Lemma~\ref{lm:encoding-global-FB} in $(a)$, and the last inequality follows the same argument used in \eqref{eq:term2}.

The rest of the proof is similar that in Section~\ref{sec:G-UB}, since all the other inequalities still hold under the global feedback. We skip the details in order to  avoid repetition. 

\section{Conclusion}
\label{sec:con}
We have studied the feedback capacity of the fully connected $K$-user interference channel under a symmetric topology.  This is a natural extension of the feedback capacity characterization for the $2$-user case in \cite{SuhTse11}, in which it is shown that channel output feedback can significantly improve the performance of the $2$-user interference channel. Rather surprisingly, it turns out that such an improvement can also be achieved  in the $K$-user case, except if the intended and interfering signals have the same received power at the receivers. In particular, we have shown that the per-user feedback capacity of the $K$-user FC-IC is as if there were only one source of interference in the network. Compared to the network without feedback \cite{JV:K-IF:10}, this result shows that feedback can significantly improve the network capacity. 

The coding scheme used to achieve the capacity of the network combines two well-known interference management techniques, namely, interference alignment and interference decoding. In fact, the messages at the transmitters are encoded such that the $K-1$ interfering signals are received aligned at each receiver. Closedness of lattice codes with respect to summation implies that the aligned received interference is a codeword that can be decoded, as in the $2$-user case. Another interesting aspect of this scheme is that each  message is kept secret from all receivers, except the intended one. This implies that an appropriately defined secrecy capacity of the network coincides with the capacity with no secrecy constraint. 

\appendices

\section{Coding Schemes for the Deterministic Network: Arbitrary $(n,m)$}
\label{app:det-ach}
\subsection{Weak Interference Regime ($m<n$)}
\label{app:det-ach-weak}
In the following, we generalize the coding scheme presented in Fig.~\ref{fig:det_weak} for arbitrary parameters $m$ and $n$. In the following, we use $A(a:b)$ to denote a column vector $\begin{bmatrix}A(a) & A(a+1) & \cdots & A(b)\end{bmatrix}^\prime$, where $a\leq b$ are two positive integer numbers. 
 
Denote the message of user $k$ which will be transmitted in $2$ channel uses by a $p$-ary sequence of length $2 \Rs$, namely, $\bS_k\triangleq S_k(1:2n-m)=\left[S_{k}(1), \dots, S_k(2n-m) \right]^\prime$, where $[\cdot]^\prime$ denote matrix transpose. Each user sends $p=n$ fresh symbols over its first channel use, i.e., 
\[
X_{k1}=S_k(1:n) =\begin{bmatrix}
S_k(1) & S_k(2) & \cdots & S_k(n)\end{bmatrix}^\prime.
\]
The signal received at the $\r{k}$ can be split into two parts, the part above the interference level which contains $(n-m)$  interference free symbols, and the lower $m$ symbols which is a combination of the intended symbols and interference, 
\begin{align*}
Y_{k1}\hspace{-1pt}&=\hspace{-1pt}\Big[
S^\prime_{k}(1\hspace{-1pt}:\hspace{-1pt}n-m), \nonumber\\
&\hspace{32pt} S_k(n-m+1)+S_{\sim k}(1), \dots, S_k(n)+ S_{\sim k}(m)\Big]^\prime\hspace{-2pt},
\end{align*}
where $S_{\sim k}(j)=\sum_{i\neq k} S_i(j)$ is the summation of all $p$-ary symbols sent by all the base stations except  $\t{k}$.
This received signal is sent to the transmitter via the feedback link. Transmitter $\t{k}$ first removes its own signal from this feedback signal, and then forwards the remaining symbols on its top most $m$ levels. It also transmits $(n-m)$ new fresh symbols over its lower levels:
\[
X_{k2}=\Big[
 S_{\sim k}(1), \dots, S_{\sim k}(m) , \ S^\prime_k(n+1:2n-m)\Big]^\prime.
\]
A similar operation is performed at all other transmitters, which results in a received signal at $\r{k}$ of the form 
\begin{align*}
Y_{k2} &= X_{k2}+ D^{n-m}\sum_{i\neq k} X_{i2}\nonumber\\
&= \begin{bmatrix}
S_{\sim k}(1) \\
\vdots \\
S_{\sim k}(m) \\
S_{k}(n+1) \\
\vdots \\
S_{k}(2n-m) \\
\end{bmatrix}
+ 
\begin{bmatrix}
0 \\
\vdots \\
0 \\
\sum_{i\neq k }S_{\sim i}(1) \\
\vdots \\
\sum_{i\neq k}S_{\sim i}(m) \\
\end{bmatrix}\nonumber\\
&=\hspace{-3pt}
\begin{bmatrix}
S_{\sim k}(1) \\
\vdots \\
S_{\sim k}(m) \\
S_{k}(n+1) \\
\vdots \\
\hspace{-1pt}S_{k}(2n-m) \hspace{-1pt}\\
\end{bmatrix}
\hspace{-3pt}+\hspace{-1pt} 
(K\hspace{-2pt}-\hspace{-2pt}1)\hspace{-1pt}\begin{bmatrix}
0 \\
\vdots \\
0 \\
 S_k(1) \\
\vdots \\
\hspace{-1pt}S_k(m)\hspace{-1pt}\\
\end{bmatrix}
\hspace{-2pt}+\hspace{-2pt} (K\hspace{-2pt}-\hspace{-2pt}2)\hspace{-2pt}
\begin{bmatrix}
0 \\
\vdots \\
0 \\
S_{\sim k}(1) \\
\vdots \\
\hspace{-1pt}S_{\sim k}(m)\hspace{-1pt}\\
\end{bmatrix}\hspace{-1pt}.
\end{align*}
We used the fact that 
\[
\sum_{i\neq k}\nolimits S_{\sim i}(j)= (K-1)S_k(j) + (K-2)S_{\sim k}(j) 
\]
in the last equality.
Having $Y_{k1}$ and $Y_{k2}$, receiver $\r{k}$ wishes to decode  $\bS_k$. Note that we have a linear system with $2n$ equations and $2n$ variables (including $m$ variables $S_{\sim k}(j)$ for $j=1,\dots, m$ and $2n-m$ variables including $S_k(j)$ for $j=1,\dots,2n-m$), which can be uniquely solved\footnote{It is easy to verify that the coefficient matrix is full-rank.}. Therefore, $\r{k}$ can recover all its $2n-m$ symbols transmitted by $\t{k}$, which implies a communication rate of $R_k=(2n-m)/2$. Note that the encoding operations at all transmitters are the same, and hence, a similar rate can be achieved for all pairs by applying a similar decoding.

\subsection{Strong Interference Regime ($m>n$)}

Similar to the weak interference regime, this scheme is performed over two consecutive time instances, and provides a total of $m$ information symbols for each user. Denote the message of user $k$ by a $\bS_k=\left[S_k(1),\dots, S_k(m)\right]$, which is a $p$-ary sequence of length $m$. In the first time instance, each user broadcasts its entire message, 
\[
X_{k1}=S_k(1:m)=\begin{bmatrix}
S_k(1) & \cdots & S_k(m)
\end{bmatrix}^\prime,
\]
which implies the received signal at $\r{k}$ to be
\begin{align*}
Y_{k1}&=D^{m-n} \bS_k + \bS_{\sim k}\nonumber\\
&=\hspace{-2pt}
\Big[
S_{\sim k}(1), \dots , S_{\sim k}(m-n)  , \nonumber\\
&\hspace{30pt} S_k(1)+S_{\sim k}(m-n+1) ,\dots , S_k(n) + S_{\sim k}(m) 
\Big]^\prime\hspace{-2pt}.
\end{align*}
This output is sent to the transmitter through the feedback link. In the second time slot, the transmitter simply removes its signal and forwards the remaining, that is,
\begin{align*}
X_{k2}=\begin{bmatrix}
S_{\sim k}(1) & \dots & S_{\sim k}(m) 
\end{bmatrix}^\prime.
\end{align*}
Hence, we have
\begin{align*}
Y_{k2}&=
\begin{bmatrix}
0\\
\vdots\\
0\\
S_{\sim k}(1)\\
\vdots\\
S_{\sim k}(n)
\end{bmatrix}
+
\begin{bmatrix}
\sum_{i\neq k} S_{\sim i}(1) \\
 \\
\vdots \\
\\
\\
\sum_{i\neq k} S_{\sim i}(m)
\end{bmatrix}\nonumber\\
&=\hspace{-1pt}
\begin{bmatrix}
0\\
\vdots\\
0\\
S_{\sim k}(1)\\
\vdots\\
S_{\sim k}(n)
\end{bmatrix}
\hspace{-1pt}+\hspace{-1pt}
(K\hspace{-1pt}-\hspace{-1pt}1)\hspace{-1pt} 
\begin{bmatrix}
S_k(1)\\
\\
\vdots\\
\\
\\
S_{k}(m)
\end{bmatrix}
\hspace{-1pt}+\hspace{-1pt}(K\hspace{-1pt}-\hspace{-1pt}2)\hspace{-1pt}
\begin{bmatrix}
S_{\sim k}(1)\\
\\
\vdots\\
\\
\\
S_{\sim k}(m)\\
\end{bmatrix}\\
&=(K-1) \bS_k+(D^{m-n}+(K-2) \mathbf{I}) \bS_{\sim k},
\end{align*}
where $\mathbf{I}$ is the identity matrix of proper size ($m\times m $ in the equation above). Having $Y_{k1}$ and $Y_{k2}$ together, $\r{k}$ has a linear system with $2m$ equation and $2m$ variables (including $m$ variables in $\bS_k$ and $m$ variables in $\bS_{\sim k}$):
\begin{align*}
\begin{bmatrix}
Y_{k1} \\ Y_{K2}
\end{bmatrix}
= 
\begin{bmatrix}\begin{array}{c|c}
D^{m-n} & \mathbf{I}\\
\hline
(K-1)\mathbf{I} & D^{m-n}+(K-2) \mathbf{I}
\end{array}
\end{bmatrix}
\begin{bmatrix}
\bS_k\\
\bS_{\sim k}
\end{bmatrix}.
\end{align*}
This system has a unique solution if and only if the coefficient matrix is full-rank. Note that  $K\not\equiv  1 (\mod q)$ is a necessary and sufficient condition for having a unique solution, which can be easily satisfied for a proper choice\footnote{Note that this result does not necessarily holds for all values of $p$ and $K$. For instance, this approach does not give a set of independent linear equations for the $3$-user case over the binary field. However, the encoding scheme for larger field size ($p>2$) still reveals valuable insights for  the Gaussian channel.} of $p$. Fig.~\ref{fig:det-strong} pictorially demonstrates this coding scheme for $3$-user case.

\section{Quasi-symmetric Fully Connected $K$-user Interference Channel under the Deterministic Model}
\label{app:det-qsym}
Note that the symmetry of the channels in the fully-symmetric model allows us to align the interfering signals in the second block and easily reconstruct the same interfering signal as in the first block at each receiver. This is not possible in the quasi-symmetric case, since it is not clear whether one can simultaneously align all interfering signals.

In the following we present a generic coding scheme together with a sufficient condition which guarantees feasibility of simultaneous alignment for the quasi-symmetric model. We will further show that this conditions holds for any choice of channel signs for $K=3$.
This scheme is based on opportunistically choosing the coefficient of the feedback signal in formation of the transmit signal in the second phase. 

\begin{lm}
Simultaneous alignment of interference in the quasi-symmetric fully connected $K$-user interference channel is feasible provided that there exist (non-zero) diagonal matrices $A$, $B$, $U$ and $V$ such that 
\begin{align}
\Lambda A + \Lambda B \Lambda = U + V \Lambda, 
\label{eq:suf-cond}
\end{align}
where $\Lambda=\{\lambda_{ij}\}$ is the network sign matrix with zero diagonal elements, and $\{\pm 1\}$ off-diagonal elements. 
\label{lm:suf-cond}
\end{lm}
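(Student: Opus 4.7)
The plan is to exhibit an explicit two-block coding scheme whose ability to simultaneously align interference at every receiver reduces exactly to the matrix identity \eqref{eq:suf-cond}. I would stack the block-$t$ transmit signals into a vector $\mathbf{X}_t = [X_{1,t},\ldots,X_{K,t}]^\prime$, so that the vector of pre-shift interferences received at the $K$ destinations is simply $\Lambda \mathbf{X}_t$, whose $k$-th entry is exactly $\sum_{i\neq k}\lambda_{ki} X_{i,t}$. In block one, each transmitter sends fresh information symbols on the appropriate signal levels; the feedback then lets $\t{k}$ recover its local interference $J_k \triangleq (\Lambda \mathbf{X}_1)_k$ by subtracting its own contribution from $Y_{k,1}$.

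Next I would prescribe the block-two input as
\begin{equation*}
X_{k,2} \;=\; A_{kk}\, X_{k,1} \,+\, B_{kk}\, J_k \,+\, (\text{fresh symbols on the residual levels}),
\end{equation*}
with $A$ and $B$ the diagonal matrices furnished by the hypothesis. Stacking, $\mathbf{X}_2 = (A + B\Lambda)\mathbf{X}_1 + \mathbf{F}$, where $\mathbf{F}$ collects the fresh contributions, so the interference received at $\r{k}$ in block $2$ is the $k$-th entry of
\begin{equation*}
\Lambda\mathbf{X}_2 \,=\, (\Lambda A + \Lambda B\Lambda)\mathbf{X}_1 + \Lambda \mathbf{F} \,\stackrel{\eqref{eq:suf-cond}}{=}\, (U + V\Lambda)\mathbf{X}_1 + \Lambda \mathbf{F},
\end{equation*}
namely $U_{kk} X_{k,1} + V_{kk} J_k + (\Lambda \mathbf{F})_k$. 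This is precisely the alignment we want: the part of the block-$2$ interference originating from the fresh block-$1$ symbols is expressible purely through $X_{k,1}$ and $J_k$, both of which are already observable in $Y_{k,1}$. Hence, combining $Y_{k,1}$ and $Y_{k,2}$ presents $\r{k}$ with a linear system whose unknowns are its own intended symbols together with the single aggregate $J_k$, mirroring the treatment of the fully symmetric case in Appendix~\ref{app:det-ach}.

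The main obstacle I foresee is twofold. First, one has to place the feedback-derived term $B_{kk} J_k$ and the fresh symbols on the correct signal levels so that the alignment materializes at the shift at which interference actually arrives; this is per-regime bookkeeping inherited from the symmetric construction. Second, and more delicately, one must confirm that the resulting coefficient system at each $\r{k}$ is full rank — a condition that, as the singular $\Lambda+\mathbf{I}$ example preceding this lemma demonstrates, can genuinely fail, and that (as in the strong-regime example of Section~\ref{sec:det}) may also require a careful choice of the prime $p$ defining the field. A secondary task is to isolate the regularity constraints on $(A,B,U,V)$ that guarantee this rank property — in particular ruling out the trivial solution $A=B=U=V=0$ — and then to exhibit explicit diagonal $A,B,U,V$ satisfying \eqref{eq:suf-cond} for $K=3$ whenever $\Lambda+\mathbf{I}$ is non-singular, thereby connecting this lemma to Theorem~\ref{thm:det-qsym}.
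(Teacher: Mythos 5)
Your plan is essentially the paper's own proof of Lemma~\ref{lm:suf-cond}: there, too, each transmitter learns its aggregate interference $\mathcal{I}_k$ from feedback, sends the diagonal combination $A_kS_k(\ell)+B_kI_k(\ell)$ (with fresh symbols tucked onto non-interfering levels) in the second block, and uses $\Lambda A+\Lambda B\Lambda=U+V\Lambda$ entrywise to conclude that the block-two interference at $\r{k}$ is $U_kS_k(\ell)+V_kI_k(\ell)$, i.e., aligned with quantities $\r{k}$ already observed. The full-rank step you flag is completed in the paper by direct determinant computations of the resulting $2\times 2$ block coefficient matrices, giving exactly the per-regime nondegeneracy conditions $B_k\neq 0$ (weak), $U_k\neq 0$ (strong), and $B_k-A_k+V_k+U_k\neq 0$ (moderate), while the existence of solutions for $K=3$ is, as you anticipate, handled in a separate lemma.
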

In the following we prove this lemma, and at the end of this section we show that \eqref{eq:suf-cond} can be satisfied with diagonal matrices for any $\Lambda$ of size $3\times 3$.

\begin{proof}[Proof of Lemma~\ref{lm:suf-cond}]
Assume diagonal matrices $A$, $B$, $U$ and $V$ exist such that \eqref{eq:suf-cond} holds. We present a coding scheme which guarantees simultaneous interference alignment at all the receivers. \\

\noindent \underline{Case I: Weak Interference Regime ($n>m$)}

We borrow the notation from Appendix~\ref{app:det-ach}, to denote the transmit signal of each user in the first block of transmission:
\[
X_{k1}=S_k(1:n) =\Big[
S_k(1) , S_k(2) , \dots , S_k(n)\Big]^\prime.
\]
Upon receiving feedback, transmitter $\t{k}$ can subtract its own contribution from the received signal  at $\r{k}$ and compute
\begin{align*}
\mathcal{I}_k=
\begin{bmatrix}
I_k(1) \\ I_k(2) \\ \vdots \\ I_k(m)
\end{bmatrix}
\triangleq 
\begin{bmatrix}
\sum_i \lambda_{ki} S_i(1) \\ 
\sum_i \lambda_{ki} S_i(2)\\
\vdots\\
\sum_i \lambda_{ki} S_i(m)
\end{bmatrix}.
\end{align*}
The transmit signal of user $k$ in the second block, will be formed as 
\begin{align*}
X_{k2}=\begin{bmatrix}
A_k S_{k}(1) + B_k I_k(1)\\
\vdots\\
A_k S_{k}(m) + B_k I_k(m)\\
A_k S_k(m+1) + B_k S_k(n+1)\\
\vdots\\
A_k S_k(n) + B_k S_k(2n-m)
\end{bmatrix},
\end{align*}
where $A_k$ and $B_k$ are the $k$th diagonal elements of matrices $A$ and $B$, respectively. Performing such coding scheme at each transmitter, the received signal at $\r{k}$ in the second block can be written as
\begin{align}
Y_{k2}&= X_{k2}+ D^{n-m}\sum_{i} \lambda_{ki} X_{i2}\nonumber\\
&= X_{k2}+ \sum_{i} \lambda_{ki} \begin{bmatrix}
0\\
\vdots\\
0\\
 A_i S_i(1)+ B_i \sum_j \lambda_{ij} S_j(1) \\
 \vdots\\
 A_i S_i(m)+ B_i \sum_j \lambda_{ij} S_j(m) 
\end{bmatrix}
\nonumber\\
&= 
X_{k2} + \begin{bmatrix}
0\\
\vdots\\
0\\
\sum_i [\lambda_{ki} A_i + \sum_{j} \lambda_{kj} B_j \lambda_{ji}] S_i(1)  \\
\vdots\\
\sum_i [\lambda_{ki} A_i + \sum_{j} \lambda_{kj} B_j \lambda_{ji}] S_i(m)
\end{bmatrix}
\label{eq:y2-q-sym}
\end{align} 
Now note that $\lambda_{ki} A_i$ and $\sum_{j} \lambda_{kj} B_j \lambda_{ji}$ are $(k,i)$th elements of matrices $\Lambda A$ and $\Lambda B \Lambda$, respectively, and from \eqref{eq:suf-cond} we have
\begin{align*}
\lambda_{ki} A_i + \sum_{j} \lambda_{kj} B_j \lambda_{ji}&= (\Lambda A + \Lambda B \Lambda)_{ki} \\
&= (U+V \Lambda)_{ki}= U_{ki} + V_k \lambda_{ki}.
\end{align*} 
Therefore, 
\begin{align}
\sum_i \Big[\lambda_{ki} A_i &+ \sum_{j} \lambda_{kj} B_j \lambda_{ji} \Big]  S_i(\ell) =  
\sum_i [U_{ki} + V_k \lambda_{ki}] S_i(\ell)\nonumber\\
&= U_k S_k(\ell) + V_k \sum_{i} \lambda_{ki} S_i(\ell)\nonumber\\
&= U_k S_k(\ell) + V_k  I_k(\ell),
\label{eq:q-sym-1}
\end{align}
for $\ell=1,2,\dots,m$. Replacing this into \eqref{eq:y2-q-sym}, we get
\begin{align*}
Y_{k2}\hspace{-1pt}=\hspace{-3pt} \begin{bmatrix} 
A_{k} S_k(1) + B_k I_k(1)\\
 \vdots\\
 A_{k} S_k(m) + B_k  I_k(m)\\
A_k S_k(m\hspace{-1pt}+\hspace{-1pt}1) \hspace{-1pt}+\hspace{-1pt} B_k S_k(n\hspace{-1pt}+\hspace{-1pt}1)\hspace{-2pt}\\
 \vdots\\
A_k S_k(n) + B_k S_k(2n-m)  
\end{bmatrix}
\hspace{-3pt}+\hspace{-3pt} \begin{bmatrix}
0\\
\vdots\\
0\\
U_k S_k(1) + V_k I_k(1)  \\
\vdots\\
\hspace{-1pt}U_k S_k(m) \hspace{-1pt}+\hspace{-1pt} V_k I_k(m)\hspace{-2pt}
\end{bmatrix}
\end{align*}

Therefore, all the interfering signals received at $\r{k}$ in the second block are scaled version of those received in the first block, and so it has to decode $2n-m$ information symbols $S_k(1),\dots,S_k(2n-m)$  as well as $m$ interference symbols $I(1),\dots,I(m)$ based on its $2n$ equations. This system is given by
\begin{align*}
\begin{bmatrix}
Y_{k1}\\Y_{k2}
\end{bmatrix}
&=\left[\begin{array}{c|c}
\mathbf{I} & D^{n-m}\\
\hline
A_k \mathbf{I} + U_k D^{n-m}  & B_k \mathbf{I} + V_k D^{n-m}
\end{array}\right]\nonumber\\
&\hspace{100pt}\times
\left[\begin{array}{c}
S_{k}(1:n) \\
\hline
I_k(1:m)\\
\hline
S_k(n+1:  2n -m)
\end{array}
\right]\hspace{-2pt}.
\end{align*}
One can easily verify that the determinant of the coefficient matrix of this system of equations is
\begin{align*}
\det(\mathbf{I}) &\det\left( B_k \mathbf{I} + V_k D^{n-m} - (A_k \mathbf{I} + U_k D^{n-m}) \mathbf{I}^{-1} D^{n-m} \right)\nonumber\\
&= \det ( B_k \mathbf{I} + (V_k-A_k) D^{n-m} - U_k D^{2(n-m)})\\
&=B_k^K,
\end{align*}
and hence it is full-rank and so decoding is feasible provided that $B_k\neq 0$ for $k=1,\dots,K$.  \\

\noindent \underline{Case II: Strong Interference Regime ($n<m$)}\\
Similar to the last case, each transmitter first sends $m$ information symbol in the first block. Upon receiving the feedback,  transmitter $\t{k}$ subtract its own contribution to find the interfering symbols at $\r{k}$
\begin{align*}
\mathcal{I}_k=
\begin{bmatrix}
I_k(1) \\ I_k(2) \\ \vdots \\ I_k(m)
\end{bmatrix}
\triangleq 
\begin{bmatrix}
\sum_i \lambda_{ki} S_i(1) \\ 
\sum_i \lambda_{ki} S_i(2)\\
\vdots\\
\sum_i \lambda_{ki} S_i(m)
\end{bmatrix}.
\end{align*}
It forms a linear combination of its own symbols and the interfering symbols, and transmits the result over the second block:
\begin{align*}
X_{k2}=\begin{bmatrix}
A_k S_{k}(1) + B_k I_k(1)\\
\vdots\\
A_k S_{k}(m) + B_k I_k(m)\\
\end{bmatrix}.
\end{align*}
Using \eqref{eq:q-sym-1}, the received signal at $\r{k}$ in the second block can be written as \eqref{eq:long-1} given at the top of next page.
\begin{figure*}
\begin{align}
Y_{k2}&= D^{m-n} X_{k2}+ \sum_{i} \lambda_{ki} X_{i2}
= D^{m-n} X_{k2}+ \sum_{i} \lambda_{ki} \begin{bmatrix}
 A_i S_i(1)+ B_i \sum_j \lambda_{ij} S_j(1) \\
 \vdots\\
 A_i S_i(m)+ B_i \sum_j \lambda_{ij} S_j(m) 
\end{bmatrix}\nonumber\\
&=  
\begin{bmatrix}
0 \\
\vdots\\
0\\
A_k S_k(1) + B_k I_k(1)\\
\vdots\\
A_k S_k(n) + B_k I_k(n)
\end{bmatrix}
+ \begin{bmatrix}
U_k S_k(1) + V_k I_k(1) \\
\vdots\\
U_k S_k(m-n) + V_k I_k(m-n) \\
U_k S_k(m-n+1) + V_k I_k(m-n+1) \\
\vdots\\
U_k S_k(m) + V_k I_k(m)
\end{bmatrix}
\label{eq:long-1}
\end{align} 
\hrule
\end{figure*}
Note that this is a function of only $S_k(1),\dots,S_k(m)$ and $I_k(1),\dots,I_k(m)$ variables; so the receiver can can solve the system of available $2m$ equations for these variables. More precisely, it has to solve
\begin{align*}
\begin{bmatrix}
Y_{k1}\\
Y_{k2}
\end{bmatrix}\hspace{-2pt}
= \hspace{-2pt}
\left[ 
\begin{array}{c|c}
 D^{m-n} &\mathbf{I} \\
\hline
\hspace{-3pt} U_k \mathbf{I} + A_k D^{m-n} \hspace{-2pt}& \hspace{-2pt}V_k \mathbf{I} + B_k D^{m-n}\hspace{-3pt}
\end{array}
\right]\hspace{-2pt}
\left[\begin{array}{c}
\hspace{-2pt}S_k(1:m)\hspace{-2pt}\\
\hline
\hspace{-2pt}I_{k}(1:m)\hspace{-2pt} 
\end{array}\right].
\end{align*}
It is easy to verify that the coefficient matrix is non-singular, since its determinant can be written as
\begin{align*}
\det(\mathbf{I}) &\det\left( (U_k \mathbf{I} + A_k D^{m-n}) \hspace{-1pt}- \hspace{-1pt}(V_k \mathbf{I} + B_k D^{m-n}) \mathbf{I}^{-1} D^{m-n}  \right)\nonumber\\
 &= \det\left( U_k\mathbf{I} +(A_k-V_k) D^{m-n} \hspace{-1pt}-\hspace{-1pt}B_k D^{2(m-n)}\right)\\
 &=U_k^K
\end{align*}
which is non-zero, provided that $U_k\neq 0$ for $k=1,2,\dots,K$. \\

\noindent \underline{Case III: Moderate Interference Regime ($n=m$)}\\
The coding scheme and argument for this case is the same as the previously discussed cases. At the end of the second channel use, $\r{k}$ has to solve 

\begin{align*}
\begin{bmatrix}
Y_{k1}\\
Y_{k2}
\end{bmatrix}
= 
\left[ 
\begin{array}{c|c}
\mathbf{I} &\mathbf{I} \\
\hline
 (U_k+A_k) \mathbf{I}  & (V_k+B_k) \mathbf{I} 
\end{array}
\right]
\left[\begin{array}{c}
S_k(1:n)\\
\hline
I_{k}(1:n) 
\end{array}\right].
\end{align*}
for $S_k(1:n)$. It is clear that this system can be uniquely solved provided that $B_k-A_k+V_k+U_k\neq 0$ for $k=1,\dots,K$.
\end{proof}
Now we are ready to prove Theorem~\ref{thm:det-qsym} based on Lemma~\ref{lm:suf-cond}. 

\subsection{Proof of Theorem~\ref{thm:det-qsym}}
The proof of the converse part  is similar to that of Theorem~\ref{thm:det} for $m\neq n$, and hence we skip it. For $m=n$ we may distinguish the following two cases:

\paragraph*{Case I: $\Lambda+\mathbf{I}$ is full-rank}
In this case we can use the upper bound \eqref{eq:Rsym-m-neq-n} presented in Section~\ref{sec:det-OB}, which yields 
\[
\Rs \leq \max \left(\frac{n}{2}, n-\frac{n}{2}\right)=\frac{n}{2}.
\]
\paragraph*{Case II: $\Lambda+\mathbf{I}$ is singular}
It is easy to check that if $\rank(\Lambda+\mathbf{I})>0$. If $\rank(\Lambda+\mathbf{I})=1$, then $\lambda_{ij}=1$ for $i\neq j$, and therefore the argument in \eqref{eq:UB-n-equal-m} holds, which shows $\Rs\leq n/3$. 

Now, assume $\rank(\Lambda+\mathbf{I})=2$. It is easy to verify that a $3\times 3 $ matrix with elements in $\{\pm 1\}$ has $\rank=2$ if and only if it has two (up to negative sign) identical  rows.  Without loss of generality, we may assume the first and second rows are identical, which yields $Y_2{t}$ can be deterministically recovered from $Y_{1t}$. 

So, we can write
\begin{align*}
H(W_1&, W_2,W_3 | Y_1^T) = H(W_1,W_2,W_3 | Y_1^T, Y_2^T)\nonumber\\
&=   H(W_3 | W_1,W_2, Y_1^T, Y_2^T) +H(W_1,W_2 | Y_1^T, Y_2^T)\\
& \leq H(W_3 | W_1,W_2, Y_1^T, Y_2^T, X_1^T, X_2^T) +T\epsilon_T\\
&=H(W_3 | W_1,W_2, Y_1^T, Y_2^T, X_1^T, X_2^T, X_3^T, Y_3^T) +T\epsilon_T\nonumber\\
&\leq  2T\epsilon_T,
\end{align*}
and therefore,
\begin{align*}
T(R_1&+R_2+R_3) = H(W_1,W_2,W_3) \nonumber\\
&\leq I(W_1,W_2,W_3;Y_1^T) + H( W_1,W_2,W_3|Y_1^T)\nonumber\\
&\leq H(Y_1^T) + 2T\epsilon_T \leq nT +2T\epsilon_T
\end{align*}
which yields in $\Rs\leq n/3$.

In order to prove the achievability part of Theorem~\ref{thm:det-qsym} it remains to be shown that for any sign matrix of size $3\times 3$ there exist a solution for \eqref{eq:suf-cond}.
\begin{lm}
For any given sign matrix $\Lambda$ of size $3\times 3$, there exist diagonal matrices $A$, $B$, $U$, and $V$ such that 
\begin{align*}
\Lambda A + \Lambda B \Lambda = U + V \Lambda.
\end{align*}
and one of the following holds:
\begin{enumerate}
\item[$a)$] $B_k\neq 0, \ \forall k$ (for the weak interference regime); or
\item[$b)$]  $U_k\neq 0,\ \forall k$ (for the strong interference regime); or
\item[$c)$] $B_k-A_k+V_k-U_k\neq 0, \ \forall k$ (for the moderate interference regime).
\end{enumerate}
\label{lm:q-sym-K3}
\end{lm}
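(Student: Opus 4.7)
The plan is to reduce the matrix equation $\Lambda A+\Lambda B\Lambda=U+V\Lambda$ to an explicit linear system on the $12$ scalar unknowns, parametrize its solution space in the $3\times 3$ setting, and then check the three non-degeneracy alternatives almost by inspection. First I would expand the equation entry-by-entry. Using $\lambda_{ii}=0$ and $\lambda_{ij}^2=1$, the diagonal part forces $U_i=\sum_{j\neq i}B_j$, while each off-diagonal position $(i,j)$ involves only the unique third index $k$ and, after multiplying by $\lambda_{ij}$, reads $A_j+\sigma_{ij}^k B_k=V_i$ with $\sigma_{ij}^k:=\lambda_{ij}\lambda_{ik}\lambda_{kj}\in\{\pm 1\}$. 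Eliminating the three $V_i$'s from the six off-diagonal equations yields three expressions for the differences $A_j-A_k$ in terms of $B$; summing two and comparing with the third produces one necessary compatibility condition on $B$ alone, $\Delta_{231}B_1-\Delta_{132}B_2+\Delta_{123}B_3=0$, where $\Delta_{ijk}:=\sigma_{ij}^k-\sigma_{ji}^k$.

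The second step exploits the identity $\sigma_{ij}^k\sigma_{ji}^k=\prod_{p\neq q}\lambda_{pq}$, which yields a clean dichotomy: either this sign-product equals $+1$, forcing all three $\Delta$'s to vanish and the constraint on $B$ to be trivial (Case (i)), or it equals $-1$, making all three $\Delta$'s equal to $\pm 2$ and confining $B$ to a $2$-dimensional subspace (Case (ii)). Conditions (a) and (b) are then easy in both cases: take $B=(1,1,1)$ in Case (i), giving $U_k=2$; in Case (ii) a generic $B$ in the $2$-plane avoids the finitely many hyperplanes $\{B_k=0\}$ and $\{B_i+B_j=0\}$, which is possible because all the constraint coefficients are nonzero.

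Condition (c) is more delicate. After fixing the gauge $A_3=0$ and substituting the parametric formulas for $A,V,U$ into $B_k-A_k+V_k-U_k$, I obtain three linear functionals of $B$. In Case (i) these involve only the three common values $\alpha:=\sigma_{12}^3$, $\beta:=\sigma_{13}^2$, $\gamma:=\sigma_{23}^1$, and a direct coefficient inspection shows they vanish identically on $\mathbb{R}^3$ iff $\alpha=\beta=\gamma=1$; pairwise division of the defining $\sigma$-equations then forces $\Lambda$ to be symmetric with $\lambda_{12}\lambda_{13}\lambda_{23}=1$, which makes $\Lambda+\mathbf{I}$ rank one. In Case (ii), a clean cancellation using the constraint reduces the three functionals on the constraint subspace to the symmetric triple $B_1-B_2-B_3$, $-B_1+B_2-B_3$, $-B_1-B_2+B_3$; (c) then fails exactly when the constraint vector $(s_1,-s_2,s_3)$, with $s_k:=\sigma_{ij}^k$, is parallel to one of $(1,-1,-1)$, $(-1,1,-1)$, $(-1,-1,1)$, giving six bad sign triples.

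The main obstacle is closing the loop: showing that each of those six bad patterns in Case (ii) in fact forces $\Lambda+\mathbf{I}$ to be singular, so that (c) is feasible precisely when $\Lambda+\mathbf{I}$ is non-singular, which is the regime where Theorem~\ref{thm:det-qsym} actually requires (c). I would handle this by direct enumeration. Each bad pattern pins down a specific symmetry class of $\Lambda$ (two symmetric off-diagonal pairs $\lambda_{ij}=\lambda_{ji}$ and one antisymmetric pair $\lambda_{i'j'}=-\lambda_{j'i'}$, together with a sign condition on the three free entries $a,b,c$), and the $3\times 3$ determinant $\det(\Lambda+\mathbf{I})$ then reduces, using $a^2=b^2=c^2=1$, to an identically-vanishing expression of the form $1+c^2-a^2-b^2=0$. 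Combined with the Case (i) analysis of the lone bad triple $\alpha=\beta=\gamma=1$, this gives: (a) and (b) always hold, and (c) holds whenever $\Lambda+\mathbf{I}$ is non-singular, which is exactly the three-part disjunction the lemma requires.
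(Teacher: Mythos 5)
Your entry-by-entry reduction is the right idea, and the off-diagonal bookkeeping is correct: $(i,j)$ with third index $k$ gives $A_j+\sigma_{ij}^k B_k=V_i$ with $\sigma_{ij}^k=\lambda_{ij}\lambda_{ik}\lambda_{kj}$, the compatibility condition on $B$ follows by eliminating the $V_i$, and the dichotomy via $\sigma_{ij}^k\sigma_{ji}^k=\prod_{p\neq q}\lambda_{pq}$ is sound. The diagonal equation, however, is wrong: $(\Lambda B\Lambda)_{ii}=\sum_{j\neq i}\lambda_{ij}\lambda_{ji}B_j$, so the correct relation is $U_i=\sum_{j\neq i}\lambda_{ij}\lambda_{ji}B_j$, not $U_i=\sum_{j\neq i}B_j$; replacing $\lambda_{ij}\lambda_{ji}$ by $\lambda_{ij}^2=1$ silently assumes $\Lambda$ is symmetric, which is exactly what the quasi-symmetric model does not grant. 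This is harmless for condition (a) (which does not involve $U$), and condition (b) can still be rescued, but not by your stated choice: with $B=(1,1,1)$ one gets $U_1=\lambda_{12}\lambda_{21}+\lambda_{13}\lambda_{31}$, which is $0$ whenever exactly one of the two pairs through user $1$ is antisymmetric; you need instead a genericity argument using the fact that each corrected $U_k$ is a nonzero linear functional of $B$ on the relevant solution space.

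The error becomes fatal exactly where you need precision, namely in the condition-(c) analysis: both your Case (i) criterion (``the functionals vanish identically iff $\alpha=\beta=\gamma=1$'') and your Case (ii) reduction to the symmetric triple with six bad sign patterns are computed with the wrong $U$, and they do not survive the correction. Concretely, take $\lambda_{12}=1$, $\lambda_{13}=-1$, $\lambda_{21}=-1$, $\lambda_{23}=1$, $\lambda_{31}=\lambda_{32}=1$. Then the global sign product is $+1$ (your Case (i)) with $\alpha=\beta=-1$, $\gamma=1$, so your criterion predicts (c) is feasible; but writing out the nine scalar equations gives $A_1=A_3+B_1+B_3$, $V_1=A_3-B_2$ and $U_1=\lambda_{12}\lambda_{21}B_2+\lambda_{13}\lambda_{31}B_3=-B_2-B_3$, whence $B_1-A_1+V_1-U_1=0$ for \emph{every} solution $(A,B,U,V)$ of $\Lambda A+\Lambda B\Lambda=U+V\Lambda$, i.e.\ condition (c) is infeasible for this $\Lambda$, even though $\det(\Lambda+\mathbf{I})=4\neq 0$. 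So the loop-closing step of your plan --- that the bad patterns for (c) are precisely the sign matrices with $\Lambda+\mathbf{I}$ singular --- is not something the corrected computation delivers, and whatever the resolution for such matrices, your derivation as written does not establish part (c); the Case-(c) analysis has to be redone from the corrected equations rather than patched. For comparison, the paper's own proof is much less explicit: it only counts $12$ unknowns against $9$ linear equations and asserts feasibility of the extra constraints when $\Lambda+\mathbf{I}$ is full rank, so your explicit solution-space analysis is a reasonable way to substantiate that claim, but in its present form it does not do so.
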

\begin{proof}[Proof of Lemma~\ref{lm:q-sym-K3}]
Note that for $K=3$ we are free to choose $12$ variables (the diagonal elements of matrices $A$, $B$, $U$ and $V$) such that they satisfy $9$ linear equations. It is clear that this system of equations has multiple solutions. Hence, depending in the interference regime of interest, we can choose $3$ variables in order to make the overall transition matrix of two channel uses full rank, i.e., we set $B_k\neq 0$ in the weak interference regime for $k=1,2,3$, and $U_k\leq 0$ for the strong interference regime. 

Regarding the moderate interference regime ($m=n$), the constraint to have a solvable system of equation is $B_k-A_k+V_k+U_k\neq 0$ for $k=1,2,3$. It can be shown that when $\Lambda+\mathbf{I}$ is a full-rank matrix, these additional constraints are feasible with the solution for matrices $A$, $B$, $U$, and $V$, and therefore $\Rs=n/2$ can be achieved using cooperative interference alignment. However, if $\Lambda+\mathbf{I}$ is a singular matrix, $\Rs=n/3$ can be simply achieved by time-sharing scheme. This completes the proof. 
\end{proof}

\section{Weak Secrecy Provided by Using Lattice Codes}
\label{app:secrecy}
In this section we prove the claim of Remark~\ref{rmk:secrecy}. We can break the output signal of $\r{j}$ into two blocks, and write 
\begin{align}
I(y_j^{2T} &; W_k)= I(y_{j1}^T, y_{j2}^T ; W_k) \leq I(y_{j1}^T, y_{j2}^T ; \bs_{k0}, \bc_{k1}, \bc_{k2})\nonumber\\
&\leq I(y_{j1}^T, y_{j2}^T, \bs_{j0}, \bc_{j1}, \bc_{j2} ; \bs_{k0}, \bc_{k1}, \bc_{k2})\nonumber\\
&= \hspace{-1pt} I\Bigg( \hspace{-2pt} \sum_{i\neq j} \bs_{i0} , \hspace{-1pt} \sum_{i\neq j} \bc_{i1}+\bz_{j1}, \hspace{-1pt} \sum_{i\neq j}  \bc_{i2} + \bz_{j2}  ;
\bs_{k0}, \bc_{k1}, \bc_{k2} \hspace{-2pt}\Bigg)\nonumber\\
&= I\Big(\sum_{i\neq j} \bs_{i0} ; \bs_{k0}\Big) + I\Big(\sum_{i\neq j} \bc_{i1}+\bz_{j1} ; \bc_{k1}\Big) \nonumber\\
&\phantom{=}+ I\Big(\sum_{i\neq j} \bc_{i2}+\bz_{j2} ; \bc_{k2}\Big)\label{eq:secrecy-1}
\end{align}
where \eqref{eq:secrecy-1} holds because the three pairs $(\sum_{i\neq j} \bs_{i0} , \bs_{k0})$, $(\sum_{i\neq j} \bc_{i1}+\bz_{j1} ; \bc_{k1})$ and $(\sum_{i\neq j} \bc_{i2}+\bz_{j2} ; \bc_{k2})$ are mutually independent. The first term in \eqref{eq:secrecy-1} is zero for $K\geq 3$ due to the crypto lemma \cite{forney2003role}. The second and third terms can be upper bounded using the mutual information expression for  Gaussian variables. Hence, 
\begin{align*}
I(y_{j}^{2T} ; W_k) \leq 2 \frac{T}{2} \log \left(1 +\frac{1}{K-1}\right)=T\log \frac{K}{K-1}
\end{align*}
which is constant with respect to $\SNR$. 

\section*{Acknowledgement}
We are grateful to the associate editor and the reviewers for their suggestions which led to
the enhancement of the scope and the readability of this work.

\bibliographystyle{IEEEtran}

\bibliography{Refs}

\begin{biographynophoto}
{Soheil Mohajer} received the B.Sc. degree in electrical
engineering from the Sharif University of
Technology, Tehran, Iran, in 2004, and the M.Sc.
and Ph.D. degrees in communication systems both
from Ecole Polytechnique F\'{e}d\'{e}rale de Lausanne
(EPFL), Lausanne, Switzerland, in 2005 and 2010,
respectively. He then joined Princeton University,
New Jersey, as a post-doctoral research associate. Dr.
Mohajer has been a post-doctoral researcher at the
University of California at Berkeley, since October
2011.

His research interests include network information theory, data compression, wireless communication,  and bioinformatics.
\end{biographynophoto}

\begin{biographynophoto}{Ravi Tandon} (S03, M09) received the B.Tech degree
in electrical engineering from the Indian Institute
of Technology (IIT), Kanpur in 2004 and
the Ph.D. degree in electrical and computer engineering
from the University of Maryland, College
Park in 2010. From 2010 until 2012, he was
a post-doctoral research associate with Princeton
University. In 2012, he joined Virginia Polytechnic
Institute and State University (Virginia Tech)
at Blacksburg, where he is currently a Research
Assistant Professor in the Department of Electrical
and Computer Engineering. His research interests are in network information
theory, communication theory for wireless networks and information theoretic
security.

Dr. Tandon is a recipient of the Best Paper Award at the Communication
Theory symposium at the 2011 IEEE Global Telecommunications Conference.
\end{biographynophoto}

\begin{biographynophoto}
{H. Vincent Poor} (S72, M77, SM82, F87) received
the Ph.D. degree in electrical engineering and computer
science from Princeton University in 1977.
From 1977 until 1990, he was on the faculty of the
University of Illinois at Urbana-Champaign. Since
1990 he has been on the faculty at Princeton, where
he is the Dean of Engineering and Applied Science,
and the Michael Henry Strater University Professor
of Electrical Engineering. Dr. Poor's research interests
are in the areas of stochastic analysis, statistical
signal processing and information theory, and their
applications in wireless networks and related fields including social networks
and smart grid. Among his publications in these areas are 
{\it Smart Grid Communications and Networking} 
(Cambridge University Press, 2012) and {\it Principles of 
Cognitive Radio} (Cambridge University Press, 2013).

Dr. Poor is a member of the National Academy of Engineering and the
National Academy of Sciences, a Fellow of the American Academy of
Arts and Sciences, and an International Fellow of the Royal Academy of
Engineering (U. K.). He is also a Fellow of the Institute of Mathematical
Statistics, the Optical Society of America, and other organizations. In 1990,
he served as President of the IEEE Information Theory Society, in 2004-07
as the Editor-in-Chief of these {\it Transactions}, and in 2009 as General Co-chair
of the IEEE International Symposium on Information Theory, held in Seoul,
South Korea. He received a Guggenheim Fellowship in 2002 and the IEEE
Education Medal in 2005. Recent recognition of his work includes the 2010
IET Ambrose Fleming Medal for Achievement in Communications, the 2011
IEEE Eric E. Sumner Award, the 2011 IEEE Information Theory Paper Award,
and honorary doctorates from Aalborg University, the Hong Kong University
of Science and Technology, and the University of Edinburgh.
\end{biographynophoto}

\end{document}